\newcommand{\deff}{\mbox{$\stackrel{\rm def}{=}$}}
\newcommand{\sbinom}[2]{\left[ \begin{array}{c} #1 \\ #2 \end{array} \right] }
\newcommand{\field}[1]{\mathbb{#1}}
\newcommand{\cA}{{\cal A}}
\newcommand{\cB}{{\cal B}}
\newcommand{\cC}{{\cal C}}
\newcommand{\cD}{{\cal D}}
\newcommand{\cV}{{\cal V}}
\newcommand{\sP}{\field{P}}
\newcommand{\sG}{\field{G}}
\DeclareMathAlphabet{\mathbfsl}{OT1}{cmr}{bx}{it}
\newcommand{\uuu}{\kern-1pt\mathbfsl{u}\kern-0.5pt}
\newcommand{\vvv}{\kern-1pt\mathbfsl{v}\kern-0.5pt}
\newcommand{\myboxplus}{\kern1pt\mbox{\small$\boxplus$}}
\makeatletter \DeclareRobustCommand{\sbinom}{\genfrac[]\z@{}}
\newcommand{\G}[2]{\sbinom{{#1}\kern-1pt}{{#2}\kern-1pt}}
\newcommand{\Gq}[2]{\sbinom{{#1}\kern-0.25pt}{{#2}\kern-0.25pt}}
\newcommand{\Ps}{\smash{{\sP\kern-2.0pt}_q\kern-0.5pt(n)}}
\newcommand{\sPs}{\smash{{\sP\kern-1.5pt}_q(n)}}
\newcommand{\Ptwo}{\smash{{\sP\kern-2.0pt}_2\kern-0.5pt(n)}}
\newcommand{\Ptwom}{\smash{{\sP\kern-2.0pt}_2\kern-0.5pt(m)}}
\newcommand{\Ptwonm}{\smash{{\sP\kern-2.0pt}_2\kern-0.5pt(n+m)}}
\newcommand{\Ptwoa}{\smash{{\sP\kern-2.0pt}_2\kern-0.5pt(1)}}
\newcommand{\Ptwob}{\smash{{\sP\kern-2.0pt}_2\kern-0.5pt(2)}}
\newcommand{\Ptwoc}{\smash{{\sP\kern-2.0pt}_2\kern-0.5pt(3)}}
\newcommand{\Ptwod}{\smash{{\sP\kern-2.0pt}_2\kern-0.5pt(4)}}
\newcommand{\Ptwoe}{\smash{{\sP\kern-2.0pt}_2\kern-0.5pt(5)}}
\newcommand{\Ptwof}{\smash{{\sP\kern-2.0pt}_2\kern-0.5pt(6)}}
\newcommand{\Ptwokm}{\smash{{\sP\kern-2.0pt}_2\kern-0.5pt(2k-1)}}
\newcommand{\Pone}{\smash{{\sP\kern-2.5pt}_2\kern-0.5pt(n{-}1)}}
\newcommand{\Gr}{\smash{{\sG\kern-1.5pt}_q\kern-0.5pt(n,k)}}
\newcommand{\Gi}{\smash{{\sG\kern-1.5pt}_q\kern-0.5pt(n,i)}}
\newcommand{\Gj}{\smash{{\sG\kern-1.5pt}_q\kern-0.5pt(n,j)}}
\newcommand{\Grmk}{\smash{{\sG\kern-1.5pt}_q\kern-0.5pt(n,n-k)}}
\newcommand{\Grdk}{\smash{{\sG\kern-1.5pt}_q\kern-0.5pt(2k,k)}}
\newcommand{\Grekappa}{\smash{{\sG\kern-1.5pt}_q\kern-0.5pt(n,e+1-\kappa)}}
\newcommand{\Grtwoekappa}{\smash{{\sG\kern-1.5pt}_q\kern-0.5pt(n,2e+1-\kappa)}}
\newcommand{\Gremkappa}{\smash{{\sG\kern-1.5pt}_q\kern-0.5pt(n,e-\kappa)}}
\newcommand{\Gn}{\smash{{\sG\kern-1.5pt}_2\kern-0.5pt(n,n{-}1)}}
\newcommand{\Gnq}{\smash{{\sG\kern-1.5pt}_q\kern-0.5pt(n,n{-}1)}}
\newcommand{\Gone}{\smash{{\sG\kern-1.5pt}_2\kern-0.5pt(n,1)}}
\newcommand{\Gqone}{\smash{{\sG\kern-1.5pt}_q\kern-0.5pt(n,1)}}
\newcommand{\GTwo}{\smash{{\sG\kern-1.5pt}_2\kern-0.5pt(n,k)}}
\newcommand{\GTwonk}[2]{{\smash{{\sG\kern-1.5pt}_2\kern-0.5pt({#1},{#2})}}}
\newcommand{\Gnk}{\smash{{\sG\kern-1.5pt}_2\kern-0.5pt(n,n{-}k)}}
\newcommand{\Greone}{\smash{{\sG\kern-1.5pt}_q\kern-0.5pt(n,e{+}1)}}
\newcommand{\Gretwo}{\smash{{\sG\kern-1.5pt}_q\kern-0.5pt(n,e{+}2)}}
\newcommand{\be}[1]{\begin{equation}\label{#1}}
\newcommand{\ee}{\end{equation}}
\newcommand{\Cref}[1]{Co\-rol\-la\-ry\,\ref{#1}}
\newcommand\scalemath[2]{\scalebox{#1}{\mbox{\ensuremath{\displaystyle #2}}}}
\newtheorem{theorem}{Theorem}
\newtheorem{lemma}{Lemma}
\newtheorem{remark}{Remark}
\newtheorem{cor}{Corollary}
\begin{document}

%\title{\huge Perfect Permutations Codes with the Kendall's $\tau$-Metric}
\title{Bounds on the Size of Permutation Codes \\ with the Kendall $\tau$-Metric\vspace{-1.0ex}}

\author{Sarit Buzaglo and Tuvi Etzion, Fellow, IEEE,%

\thanks{This work was supported in part by the United States --- Israel
Binational Science~Foundation (BSF), Jerusalem, Israel, under Grant 2012016.
This work is part of S. Buzaglo PhD dissertation performed at the Technion--Israel Institute of Technology.
The material in this paper was presented in part in the 2014 IEEE International Symposium on Information Theory, Honolulu,
Hawaii, June-July 2014.}

\thanks{S.~Buzaglo is with the Center for Magnetic Recording Research, University of California, San Diego, La Jolla, CA 92093-0401 USA (e-mail: \texttt{sbuzaglo@ucsd.edu}). }
\thanks{
T.~Etzion is with the Computer Science Department, Technion--Israel
Institute of Technology, Haifa 32000, Israel
(e-mail: \texttt{etzion@cs.technion.ac.il}).
}

}

\maketitle
\begin{abstract}

The rank modulation scheme has been proposed for efficient writing
and storing data in non-volatile memory storage. Error-correction
in the rank modulation scheme is done by considering permutation codes.
In this paper we consider codes in the set of all permutations on $n$ elements,
$S_n$, using the Kendall $\tau$-metric. The main goal of this paper is
to derive new bounds on the size of such codes. For this purpose we also consider
perfect codes, diameter perfect codes, and the size of optimal anticodes
in the Kendall $\tau$-metric, structures which have their own considerable interest. We prove that there are no perfect
single-error-correcting codes in $S_n$, where $n>4$ is a prime or
$4\leq n\leq 10$. We present lower bounds on the size of optimal anticodes with odd diameter. As a consequence we obtain a new upper bound
on the size of codes in $S_n$ with even minimum Kendall $\tau$-distance.
We present larger single-error-correcting codes than the known ones
in $S_5$ and $S_7$.
\end{abstract}

\begin{IEEEkeywords}
Anticodes, bounds, flash memory, Kendall $\tau$-metric, perfect codes, permutations
\end{IEEEkeywords}

%%%%%%%%%%%%%%%%%%%%%%%%%%%%%%%%%
%%%
%%%     INTRODUCTION
%%%
%%%%%%%%%%%%%%%%%%%%%%%%%%%%%%%%%%%%

\section{Introduction}
Flash memory is a non-volatile technology that is both electrically
programmable and electrically erasable.
It incorporates a set of cells
maintained at a set of levels of charge to encode information.
While raising the charge level of a cell is an easy operation,
reducing the charge level requires the erasure of the whole block to which the cell belongs.
For this reason charge is injected into the cell over several iterations.
Such programming is slow and can cause errors since cells may be injected with extra unwanted charge.
Other common errors in flash memory cells are due to charge leakage and reading disturbance that
may cause charge to move from one cell to its adjacent cells.
In order to overcome these problems, the novel framework of \emph{rank modulation codes} was introduced in  \cite{JMSB09}.
In this setup the information is carried by the relative ranking of the
cells’ charge levels and not by the absolute values of the charge levels.
This allows for more efficient programming of cells, and coding by the ranking of the cells' levels
is more robust to charge leakage than coding by their actual values.
In this model codes are subsets of~$S_n$, the set of all permutations on $n$ elements,
where each permutation corresponds to a ranking of $n$ cells' levels.
Permutation codes were mainly studied in this context using three metrics, the infinity metric, the Ulam metric,
and the Kendall $\tau$-metric. Codes in $S_n$ under the infinity metric were
considered in \cite{KLT10,ShTs10,TaSc10,TaSc12}.
Anticodes in $S_n$ under the infinity metric were considered in \cite{Klo11,ShTs11,TaSc11}.
Codes in $S_n$ under the Ulam metric were considered in~\cite{FSM12}.
Permutation codes with other metrics were considered in many papers.
A survey on metrics related to permutations is given in~\cite{DeHu98}.

In this paper we consider codes using the Kendall $\tau$-metric~\cite{KeGi90}.
Under the Kendall $\tau$-metric, codes in $S_n$
with minimum distance $d$ should correct up to
$\left\lfloor\frac{d-1}{2}\right\rfloor$ errors that are
caused by small charge leakage and read disturbance. For large charge leakage and read
disturbance the Ulam metric is used~\cite{FSM12}. Let $P(n,d)$ denote the size of the largest code in $S_n$ with minimum Kendall $\tau$-distance $d$.
A~comprehensive work on error-correcting codes in $S_n$
using the Kendall $\tau$-metric and bounds on $P(n,d)$ were
considered in \cite{JSB10}. In that paper
there is also a construction of single-error-correcting
codes using codes in the Lee metric.
This method was generalized in \cite{BM10} for the construction of
$t$-error-correcting codes that are of optimal size
up to a constant factor, where $t$ is fixed. More constructions of error-correcting codes were given in~\cite{MBZ13}.
Systematic single-error-correcting
codes in~$S_n$ of size $(n-2)!$ were constructed in~\cite{ZJB12,ZSJB13}. The constructed codes are of optimal size,
assuming that perfect single-error-correcting codes do not exist.
But, only the nonexistence of perfect single-error-correcting codes for $n=4$ was proved.
Systematic $t$-error-correcting codes were studied in \cite{BYEB14,ZJB12,ZSJB13}.
Linear programming and semi-definite programming on permutation codes with the Kendall $\tau$-metric were considered in~\cite{LiHa12}.
Unfortunately, no bounds better than the sphere packing bound were found by these methods.

The main goal of this paper is to provide new bounds on the size of permutation codes in the Kendall $\tau$-metric. As part of this goal we will prove the nonexistence of perfect single-error-correcting codes in $S_n$ if $n$ is a prime. Although this improves the related upper bound on $P(n,3)$ only by one, such a result is of interest for itself. This is one of the two main results of this paper. The second main result is a new upper bound on the size of permutation codes in the Kendall $\tau$-metric, where the minimum distance is even. This bound is obtained by introducing the notion of anticodes in the Kendall $\tau$-metric and proving a related code-anticode theorem. Finally, we present two codes with minimum distance 3 in $S_5$ and $S_7$, which are considerably larger than the previous known codes. These codes are of special interest since the rank modulation scheme is more likely to be applicable for small values of~$n$.

The rest of this work is organized as follows. In
Section~\ref{sec:basic} we define the basic concepts for
the Kendall $\tau$-metric and for perfect codes.
In Section~\ref{sec:nonexist} we prove the nonexistence of
a perfect single-error-correcting code in $S_n$, using the Kendall $\tau$-metric,
where $n>4$ is a prime or $4\leq n\leq 10$. This is the first known result in this
direction and it shows that the sphere packing upper bound can not be attained in these cases.
In Section~\ref{sec:diameter} we establish the Delsarte's code-anticode bound for the
Kendall $\tau$-metric and examine
diameter perfect codes in $S_n$ for this metric.
We find the sizes of optimal anticodes in $S_n$ with diameter 2 and diameter 3
and consider the size of optimal anticodes for larger diameters as well.
Trivial diameter perfect codes are considered in some of these cases.
We combine these results with the code-anticode bound to
improve the known upper bound on the size of a code in~$S_n$
for even minimum distances.
In Section~\ref{sec:cyclic} we consider lower bounds on the size of permutation codes in the Kendall $\tau$-metric for small values of $n$. We search for such codes by forcing a structure and a certain automorphism group on the codes. Two large single-error-correcting codes for $n=5$ and $n=7$ are constructed in this way and yield an improvement on the related lower bounds.
We conclude in Section~\ref{sec:conclusion}, where we also present some questions for future research.

%%%%%%%%%%%%%%%%%%%%%%%%%%%%%%
%%
%%      BASIC CONCEPTS
%%
%%%%%%%%%%%%%%%%%%%%%%%%%%%%%%%%%%

\section{Basic Concepts}
\label{sec:basic}

Let $S_n$ be the set of all permutations on the set of $n$ elements $[n]\deff\{1,2,\ldots,n\}$.
We denote a permutation $\sigma\in S_n$ by $\sigma=[\sigma(1),\sigma(2),\ldots,\sigma(n)]$.
For two permutations $\sigma , \pi \in S_n$, their multiplication $\pi\circ \sigma$ is defined as the
composition of $\sigma$ on $\pi$, namely, $\pi\circ\sigma(i)=\sigma(\pi(i))$, for all $1\leq i\leq n$.
Under this operation, the set~$S_n$ is a noncommutative group, known as the symmetric group
of order $n!$.
We denote by $\varepsilon\deff[1,2,\ldots,n]$ the identity permutation of~$S_n$.
Given a permutation $\sigma\in S_n$, an \emph{adjacent transposition}, $(i,i+1)$, for some $1\leq i\leq n-1$, is an exchange of the two
adjacent elements $\sigma(i)$ and $\sigma(i+1)$ in $\sigma$.
The result is the permutation $\pi=[\sigma(1),\ldots,{\sigma(i-1)},{\sigma(i+1)},\sigma(i),{\sigma(i+2)},\ldots,\sigma(n)]$.
Observe that the notation $(i,i+1)$ is also used for the cycle decomposition of the permutation $[1,2,\ldots,i-1,i+1,i,i+2,\ldots,n]$
and the permutation $\pi$ can also be written as ${\pi=(i,i+1)\circ \sigma}$.
In other words, left multiplication by $(i,i+1)$ exchanges the elements in positions $i,i+1$.
Right multiplication by $(i,i+1)$ exchanges the elements $i,i+1$.
Two adjacent transpositions $(i,i+1)$ and $(j,j+1)$ are called
\emph{disjoint} if either $i+1<j$ or $j+1<i$. For two permutations $\sigma,\pi \in S_n$, the Kendall $\tau$-distance between $\sigma$ and
$\pi$, $d_K(\sigma,\pi)$, is defined as the minimum number of adjacent transpositions needed
to transform $\sigma$ into $\pi$ \cite{KeGi90}.
For $\sigma\in S_n$, the Kendall $\tau$-weight of $\sigma$, $w_K(\sigma)$, is
defined as the Kendall
$\tau$-distance between $\sigma$ and the identity permutation $\varepsilon$.
The following expression for $d_K(\sigma,\pi)$ is well known~\cite{JSB10},~\cite{Knu98}.
%\vspace{-0.25cm}
%\begin{small}
\begin{equation}
\label{eq:distFormula}
d_K(\sigma,\pi)= \\ |\{(i,j)~:~\sigma^{-1}(i)<\sigma^{-1}(j)\wedge\pi^{-1}(i)>\pi^{-1}(j)\}|.
\end{equation}
%\end{small}

For a permutation $\sigma=[\sigma(1),\sigma(2),\ldots,\sigma(n)]\in S_n$, the \emph{reverse} of $\sigma$
is the permutation $\sigma^r\deff[\sigma(n),\sigma(n-1),\ldots,\sigma(2),\sigma(1)]$.
It follows from equation (\ref{eq:distFormula}) that for every $\sigma,\pi\in S_n$, $d_K(\sigma,\pi)\leq{n\choose 2}$ and $d_K(\sigma,\pi)={n\choose 2}$ if and only if $\pi=\sigma^r$.
The following lemma is an immediate consequence from the expression to compute the Kendall $\tau$-distance given in~(\ref{eq:distFormula}).
\begin{lemma}
\label{lem:reverse}
For every $\sigma,\pi\in S_n$,
$$
d_K(\sigma,\pi)+d_K(\sigma^r,\pi)=d_K(\sigma,\sigma^r)={n\choose 2}.
$$
\end{lemma}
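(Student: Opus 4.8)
The plan is to use the distance formula (\ref{eq:distFormula}) directly, working with the characterization of Kendall $\tau$-distance as counting inversions between pairs. The key observation is that for any pair $(i,j)$ with $i \neq j$, the reverse permutation $\sigma^r$ flips the relative order of $i$ and $j$ compared to $\sigma$: that is, $(\sigma^r)^{-1}(i) < (\sigma^r)^{-1}(j)$ if and only if $\sigma^{-1}(i) > \sigma^{-1}(j)$. This is because reversing the sequence exactly reverses all pairwise orderings of the positions.

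First I would fix an arbitrary unordered pair $\{i,j\}$ and consider its contribution to each of the three distances in the claimed identity. By (\ref{eq:distFormula}), $d_K(\sigma,\pi)$ counts those pairs where $\sigma$ and $\pi$ disagree on the relative order of $i$ and $j$; let me write $a_{ij}$ for the indicator that $\sigma^{-1}(i) < \sigma^{-1}(j)$, and similarly $b_{ij}$, $c_{ij}$ for $\pi$ and $\sigma^r$ respectively. The distance $d_K(\sigma,\pi)$ is then $\sum_{i<j} [a_{ij} \neq b_{ij}]$, and likewise for the other two. The crucial fact from the previous paragraph is that $c_{ij} = 1 - a_{ij}$, meaning $\sigma^r$ disagrees with $\sigma$ on \emph{every} pair.

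The heart of the argument is then a pairwise case check: for a fixed pair, exactly one of the two events $\{a_{ij} \neq b_{ij}\}$ and $\{c_{ij} \neq b_{ij}\}$ occurs, since $a_{ij}$ and $c_{ij}$ are complementary, so $b_{ij}$ must differ from exactly one of them. Hence the contribution of each pair to $d_K(\sigma,\pi) + d_K(\sigma^r,\pi)$ is exactly $1$, and summing over all $\binom{n}{2}$ pairs gives $d_K(\sigma,\pi) + d_K(\sigma^r,\pi) = \binom{n}{2}$. Taking $\pi = \sigma^r$ in this identity (or observing directly that $\sigma^r$ disagrees with $\sigma$ on all pairs) yields $d_K(\sigma,\sigma^r) = \binom{n}{2}$, which completes the chain of equalities.

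I do not expect a serious obstacle here, as the statement is described in the excerpt as an immediate consequence of (\ref{eq:distFormula}); the only point requiring care is verifying that reversal complements the relative order of every pair, which reduces to the elementary fact that $(\sigma^r)^{-1}(i) = n + 1 - \sigma^{-1}(i)$, so that inequalities between $(\sigma^r)^{-1}(i)$ and $(\sigma^r)^{-1}(j)$ are reversed. The fact that $d_K(\sigma,\sigma^r) = \binom{n}{2}$ is its maximal possible value, already noted in the text preceding the lemma, serves as a consistency check on the final equality.
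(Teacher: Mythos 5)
Your proposal is correct and follows exactly the route the paper intends: the paper gives no separate proof, asserting the lemma is an immediate consequence of the inversion-counting formula (\ref{eq:distFormula}), and your argument simply makes that explicit via the identity $(\sigma^r)^{-1}(i)=n+1-\sigma^{-1}(i)$, so that $\sigma^r$ reverses every pairwise order and each of the $\binom{n}{2}$ pairs contributes exactly one inversion to $d_K(\sigma,\pi)+d_K(\sigma^r,\pi)$. Nothing is missing; the pair-by-pair accounting is precisely the ``immediate consequence'' the authors had in mind.
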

%\vspace{-0.4cm}
The Kendall $\tau$-metric is right invariant \cite{Cay1878,DeHu98}, i.e.
for every three permutations $\sigma,\pi,\rho\in S_n$ we have
$d_K(\sigma,\pi)=d_K(\sigma\circ\rho,\pi\circ\rho)$. Note, that the Kendall $\tau$-metric is not left invariant.
The Kendall $\tau$-metric on~$S_n$ is graphic, i.e. for every two permutations $\sigma,\pi\in S_n$
their Kendall $\tau$-distance is equal to the length of the shortest path between $\sigma$ and $\pi$ in the graph
$G_n$, whose vertex set is the set $S_n$, and two vertices are connected by an edge if and only if their Kendall $\tau$-distance is one.

A distance measure $d(\cdot,\cdot)$ over a space $\cV$, is called \emph{bipartite} if every three elements $x,y,z\in \cV$
satisfy the equality $d(x,y)+d(y,z)\equiv d(x,z)~(\hbox{mod }2)$, i.e. the related graph is bipartite.
The Kendall $\tau$-metric on $S_n$ is bipartite as stated in the next lemma.

\begin{lemma}
\label{lem:biGraph}
The Kendall $\tau$-metric over $S_n$ is bipartite.
\end{lemma}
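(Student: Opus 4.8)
The plan is to verify the defining congruence $d_K(x,y)+d_K(y,z)\equiv d_K(x,z)\pmod 2$ directly, by reducing the parity of a Kendall $\tau$-distance to the sign (parity) of a single permutation. Equivalently, this will exhibit a proper $2$-coloring of $G_n$.

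First I would observe that, setting $\pi=\varepsilon$ in (\ref{eq:distFormula}), the weight $w_K(\sigma)$ is exactly the number of inversions of $\sigma$, so the parity $(-1)^{w_K(\sigma)}$ is the usual sign of $\sigma$. The only arithmetic input I need is the classical fact that the sign is a group homomorphism $S_n\to\{+1,-1\}$ that is invariant under inversion, i.e. $(-1)^{w_K(\sigma^{-1})}=(-1)^{w_K(\sigma)}$; both properties follow from the fact that a single adjacent transposition changes the number of inversions by exactly one.

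Next I would use the right invariance of the metric to reduce distances to weights: $d_K(\sigma,\pi)=d_K(\sigma\circ\pi^{-1},\varepsilon)=w_K(\sigma\circ\pi^{-1})$. Combining this with the homomorphism property gives
\[
(-1)^{d_K(\sigma,\pi)}=(-1)^{w_K(\sigma\circ\pi^{-1})}=(-1)^{w_K(\sigma)}\,(-1)^{w_K(\pi)},
\]
so that the parity of $d_K(\sigma,\pi)$ depends only on the individual parities of $\sigma$ and of $\pi$.

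Finally, for any $x,y,z\in S_n$ I would multiply the two resulting identities $(-1)^{d_K(x,y)}=(-1)^{w_K(x)}(-1)^{w_K(y)}$ and $(-1)^{d_K(y,z)}=(-1)^{w_K(y)}(-1)^{w_K(z)}$; the factor $(-1)^{2w_K(y)}=1$ cancels, leaving $(-1)^{d_K(x,y)+d_K(y,z)}=(-1)^{w_K(x)}(-1)^{w_K(z)}=(-1)^{d_K(x,z)}$, which is precisely the bipartiteness congruence. Equivalently, partitioning $S_n$ into even and odd permutations bipartitions $G_n$, since every edge is a single adjacent transposition and hence flips the sign. I do not anticipate a genuine obstacle here; the only point needing a little care is the reduction from the three-point congruence to a statement about weights, and right invariance is exactly the tool that makes the $(-1)^{w_K(y)}$ contribution cancel.
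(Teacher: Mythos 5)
Your proof is correct, but it takes a genuinely different route from the paper's. The paper's proof is a one-line $2$-coloring argument: by (\ref{eq:distFormula}), an adjacent transposition flips the parity of the Kendall $\tau$-weight, so every edge of $G_n$ joins an even-weight permutation to an odd-weight one; hence $G_n$ is bipartite, and the three-point congruence is then immediate because the metric is graphic (distances are shortest-path lengths in $G_n$). You instead verify the defining congruence $d_K(x,y)+d_K(y,z)\equiv d_K(x,z)\pmod{2}$ head-on: you identify $(-1)^{w_K(\cdot)}$ with the permutation sign, and combine the sign homomorphism with right invariance to obtain $(-1)^{d_K(\sigma,\pi)}=(-1)^{w_K(\sigma)+w_K(\pi)}$, from which the congruence follows by multiplying two instances. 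Note that your intermediate identity is, up to rewriting, exactly the Corollary that the paper derives \emph{from} Lemma~\ref{lem:biGraph} together with right invariance (namely $w_K(\sigma)+w_K(\pi)\equiv w_K(\sigma\circ\pi)\pmod{2}$); you prove that statement first, from the classical theory of the sign character, and then deduce the lemma, i.e., you invert the paper's logical order. Both arguments ultimately rest on the same elementary fact --- an adjacent transposition changes the inversion count by exactly one --- but yours is self-contained with respect to the congruence form of the definition (it never needs the tacit equivalence between ``$G_n$ is bipartite'' and the three-point congruence, an equivalence that holds because the metric is graphic), at the price of invoking multiplicativity and inversion-invariance of the sign; the paper's version is shorter but leaves that equivalence implicit.
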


\begin{proof}
Just note that by~(\ref{eq:distFormula}) two permutations which differ in exactly one adjacent transposition
have different weights modulo 2. This implies that the related graph $G_n$ and the Kendall $\tau$-metric are bipartite.
\end{proof}

\begin{cor}
If $\sigma$ and $\pi$ are two permutations in $S_n$ then
$w_K(\sigma)+w_K(\pi)\equiv w_K(\sigma\circ\pi)~(\hbox{mod}~2)$.
\end{cor}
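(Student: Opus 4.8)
The plan is to derive the statement directly from the bipartite property established in Lemma~\ref{lem:biGraph}, which asserts that $d_K(x,y)+d_K(y,z)\equiv d_K(x,z)\ (\hbox{mod }2)$ for all $x,y,z\in S_n$, together with the right invariance of the Kendall $\tau$-metric. The weight $w_K$ is just distance to $\varepsilon$, so the whole task is to relate $w_K(\sigma\circ\pi)$ to the distances appearing in the bipartite congruence.

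First I would rewrite $w_K(\sigma\circ\pi)$ using right invariance. Since $d_K(\alpha,\beta)=d_K(\alpha\circ\rho,\beta\circ\rho)$ for every $\rho$, taking $\rho=\pi^{-1}$ gives
\[
w_K(\sigma\circ\pi)=d_K(\sigma\circ\pi,\varepsilon)=d_K(\sigma\circ\pi\circ\pi^{-1},\varepsilon\circ\pi^{-1})=d_K(\sigma,\pi^{-1}).
\]
The same device, combined with the symmetry of the metric, shows $w_K(\pi)=d_K(\pi,\varepsilon)=d_K(\varepsilon,\pi^{-1})=w_K(\pi^{-1})$.

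Next I would invoke the bipartite congruence with the three points $x=\sigma$, $y=\varepsilon$, and $z=\pi^{-1}$, obtaining
\[
w_K(\sigma)+w_K(\pi^{-1})=d_K(\sigma,\varepsilon)+d_K(\varepsilon,\pi^{-1})\equiv d_K(\sigma,\pi^{-1})=w_K(\sigma\circ\pi)\ (\hbox{mod }2).
\]
Replacing $w_K(\pi^{-1})$ by $w_K(\pi)$ via the identity above yields the desired congruence.

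The only point requiring care is that the Kendall $\tau$-metric is right invariant but \emph{not} left invariant, as the paper stresses; hence the reduction $w_K(\sigma\circ\pi)=d_K(\sigma,\pi^{-1})$ must be carried out by right multiplication by $\pi^{-1}$, and an attempt to simplify on the left would be invalid. Beyond this bookkeeping the argument is routine. Alternatively, one may observe that $w_K(\sigma)$ equals the number of inversions of $\sigma$, whose parity is the sign of $\sigma$, so that multiplicativity of the sign gives the congruence at once; but the bipartite route keeps the proof self-contained within the framework already developed.
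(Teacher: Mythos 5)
Your proof is correct and is essentially the paper's own argument: both use right invariance to get $w_K(\pi)=w_K(\pi^{-1})$ and $w_K(\sigma\circ\pi)=d_K(\sigma,\pi^{-1})$, then apply the bipartite property of Lemma~\ref{lem:biGraph} to the triple $\sigma,\varepsilon,\pi^{-1}$. The only difference is cosmetic ordering of the steps (you reduce $w_K(\sigma\circ\pi)$ first, the paper does it last), plus your optional side remark about the sign-of-permutation argument.
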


\begin{proof}
Since the Kendall $\tau$-metric is right invariant, it follows that
$w_K(\pi)=d_K(\pi,\epsilon)=d_K(\epsilon,\pi^{-1})=w_K(\pi^{-1})$.
Hence, by the definition of the Kendall $\tau$-weight and by Lemma~\ref{lem:biGraph}, we have that
\begin{equation}
\label{eq:rel1}
w_K(\sigma)+w_K(\pi)=w_K(\sigma)+w_K(\pi^{-1})=d_K(\sigma,\epsilon) + d_K(\pi^{-1},\epsilon) \equiv d_K(\sigma,\pi^{-1}) ~(\hbox{mod}~2)~.
\end{equation}
Since the Kendall $\tau$-metric is right invariant, it follows that
\begin{equation}
\label{eq:rel2}
d_K(\sigma,\pi^{-1})=d_K(\sigma\circ\pi ,\epsilon)=w_K(\sigma\circ\pi)
\end{equation}
Thus, by (\ref{eq:rel1}) and (\ref{eq:rel2}), we have that $w_K(\sigma)+w_K(\pi)\equiv w_K(\sigma\circ\pi)~(\hbox{mod}~2)$.
\end{proof}

Given a metric space, one can define codes. We say that $\cC\subseteq S_n$ has \emph{minimum distance} $d$
if $d_K(\sigma,\pi)\geq d$, for every two distinct permutations $\sigma,\pi\in\cC$.
For a given space $\cV$ with a distance measure $d(\cdot,\cdot)$,
a subset $C$ of $\cV$ is a \emph{perfect
code} with \emph{radius}~$R$ if for every element $x \in \cV$
there exists exactly one codeword $c \in C$ such that $d(x,c) \leq
R$. For a point $x \in \cV$, the \emph{ball} of radius~$R$
centered at $x$, $B(x,R)$, is defined by $B(x,R) \deff \{ y
\in \cV ~:~ d(x,y) \leq R \}$. In the Kendall $\tau$-metric
the size of a ball does not depend on the center of the
ball. This is a consequence of the fact that the Kendall $\tau$-distance
is right invariant. It is readily verified that
\begin{theorem}
\label{thm:sphere_bound} Let $\cV$ be a space with a
distance measure $d(\cdot,\cdot)$. For a code $C\subseteq \cV$ with minimum distance
$2R+1$ and a ball $B$ with radius $R$ we have $| C | \cdot |B | \leq | \cV |$, where $|S|$
is the size of the set $S$.
\end{theorem}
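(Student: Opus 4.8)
The plan is to run the classical sphere-packing argument: exhibit the balls of radius $R$ centered at the codewords as pairwise-disjoint subsets of $\cV$, all of equal size, and then count. First I would show that for any two distinct codewords $c_1,c_2\in C$ the balls $B(c_1,R)$ and $B(c_2,R)$ are disjoint. Indeed, suppose to the contrary that some $y\in\cV$ lies in both balls. Then $d(c_1,y)\le R$ and $d(y,c_2)\le R$, and the triangle inequality gives $d(c_1,c_2)\le d(c_1,y)+d(y,c_2)\le 2R$. This contradicts the hypothesis that $C$ has minimum distance $2R+1$, so the two balls must be disjoint.

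Next I would invoke the fact, noted just before the statement, that in the Kendall $\tau$-metric the size of a ball of radius $R$ is independent of its center. This is a direct consequence of right invariance: the map $\pi\mapsto\pi\circ\sigma^{-1}$ is a bijection of $S_n$ that, by $d_K(\sigma,\pi)=d_K(\sigma\circ\sigma^{-1},\pi\circ\sigma^{-1})=d_K(\varepsilon,\pi\circ\sigma^{-1})$, carries $B(\sigma,R)$ onto $B(\varepsilon,R)$. Hence every ball $B(c,R)$, $c\in C$, has the same cardinality $|B|$.

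Finally I would combine the two observations. The $|C|$ balls $\{B(c,R):c\in C\}$ are pairwise disjoint, each of cardinality $|B|$, and all contained in $\cV$. Counting the elements of their union therefore yields
$$
|C|\cdot|B| = \sum_{c\in C}|B(c,R)| = \Big|\bigcup_{c\in C}B(c,R)\Big| \le |\cV|,
$$
which is the desired inequality.

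I do not expect any genuine obstacle here, which is why the statement is flagged as readily verified. The only point requiring care is the use of the constant ball-size property: in an arbitrary metric space distinct balls of the same radius may have different cardinalities, so the clean product bound $|C|\cdot|B|$ rests precisely on the right invariance of the Kendall $\tau$-metric established earlier. Everything else is the standard disjoint-packing count driven by the triangle inequality and the minimum-distance hypothesis.
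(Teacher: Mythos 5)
Your proof is correct and is exactly the standard argument the paper has in mind: the paper itself offers no proof of this theorem (it is introduced with ``It is readily verified that''), relying on the disjointness of radius-$R$ balls around codewords via the triangle inequality and on the center-independence of ball sizes noted in the preceding paragraph. Your additional remark that the clean product bound $|C|\cdot|B|\leq|\cV|$ genuinely requires the constant ball-size property (supplied here by right invariance of the Kendall $\tau$-metric) is a correct and worthwhile observation, since the theorem as stated for an arbitrary finite metric space would otherwise only give $\sum_{c\in C}|B(c,R)|\leq|\cV|$.
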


\vspace{0.1cm}

Theorem~\ref{thm:sphere_bound} is known as the \emph{sphere packing
bound} (even so it is really a ball packing bound). In a code $C$ which attains this bound, i.e.
$| C | \cdot |B | = | \cV |$, the balls with radius $R$ around
the codewords of $C$ form a partition of $\cV$. Such a code
is a perfect code. A perfect code with radius $R$ is also called a
\emph{perfect $R$-error-correcting code}.

Perfect codes is one of the most fascinating topics in coding
theory. These codes were mainly considered for the Hamming scheme,
e.g.~\cite{EtVa94,Mol86,Phe83,Phe84a,Phe84}. They were
also considered for other schemes such as the Johnson scheme, e.g.~\cite{Etz96,EtSc,Roos}, the
Grassmann scheme \cite{Chi87,MaZh95}, and to a larger extent also in the Lee and the
Manhattan metrics, e.g.~\cite{Etz11,GoWe70,Hor09,Post}. Note, that
the minimum distance of a perfect code is always an odd integer.
A more general concept in which codes can have even minimum distances as well,
is a diameter perfect code \cite{AAK01}.
This concept is based on Delsarte's code-anticode bound \cite{Delsarte}
for distance regular graphs. Since the Kendall $\tau$-metric over
$S_n$ does not induce a distance regular graph,
Delsarte's theorem may not apply for this metric.
However, an alternative proof shows that such type of a
bound is also valid for the Kendall $\tau$-metric.

\section{The Nonexistence of Some Perfect Codes}
\label{sec:nonexist}

In this section we prove that there are no single-error-correcting codes in $S_n$, where $n$ is a prime greater than 4.
Similarly, we also show that there are no perfect single-error-correcting
codes in $S_n$, for $4\leq n\leq 10$.

For each $i$, $1\leq i\leq n$, let $T_{n,i}\deff\{\sigma~:~\sigma\in S_n,~\sigma(i)=1\}$, i.e. $\sigma\in S_n$ is an element of $T_{n,i}$
if 1 appears in the $i$th position of $\sigma$. Clearly, $|T_{n,i}|=(n-1)!$.

Assume that there exists a perfect single-error-correcting code $\cC\subset S_n$. For each $i$, $1\leq i\leq n$, let
$$
\cC_i\deff\cC\cap T_{n,i}~~~~\text{ and }~~~~ x_i\deff|\cC_i|.
$$

We say that a codeword $\sigma\in \cC$ \emph{covers} a permutation $\pi\in S_n$ if $d_K(\sigma,\pi)\leq 1$.
Since $\cC$ is a perfect single-error-correcting code, it follows that
each permutation in~$T_{n,1}$ must be at distance at most one from
exactly one codeword of $\cC$ and this codeword must belong to either $\cC_1$ or $\cC_2$.
Every codeword $\sigma\in \cC_1$ covers exactly~${n-1}$ permutations in $T_{n,1}$.
It covers itself and the $n-2$ permutations in~$T_{n,1}$
obtained from $\sigma$ by exactly one adjacent transposition $(i,i+1)$, $1 < i < n$.
Each codeword $\sigma\in\cC_2$ covers exactly one permutation $\pi \in T_{n,1}$, $\pi=(1,2)\circ\sigma$.
Therefore, we have that

\begin{equation}
\label{eq:1r1}
(n-1)x_1+x_2=(n-1)!~.
\end{equation}

Similarly, by considering how the permutations of~$T_{n,n}$ are covered by
the codewords of $\cC$, we have that

%\vspace{-0.2cm}
\begin{equation}
\label{eq:1rn}
x_{n-1}+(n-1)x_n=(n-1)!~.
\end{equation}

For each $i$, $2\leq i\leq n-1$, each permutation in~$T_{n,i}$ is covered by
exactly one codeword that belongs to
either $\cC_{i-1}$, $\cC_{i}$, or $\cC_{i+1}$. Each codeword
$\sigma\in \cC_i$ covers exactly~${n-2}$ permutations in $T_{n,i}$.
It covers itself and the $n-3$ permutations in $T_{n,i}$
obtained from $\sigma$ by exactly one adjacent transposition $(j,j+1)$, where $1 \leq j<i-1$ or $i<j<n$.
Each codeword in $\cC_{i-1}\cup \cC_{i+1}$ covers exactly one permutation from $T_{n,i}$.
Therefore, for each $i$, ${2\leq i\leq n-1}$, we have that

%\vspace{-0.2cm}
\begin{equation}
\label{eq:1regular}
x_{i-1}+(n-2)x_i+x_{i+1}=(n-1)!~.
\end{equation}

Let $\mathbf{x}=(x_1,x_2,\ldots,x_n)$ and let $\mathbf{1}$ denote the all-ones column vector.
Equations (\ref{eq:1r1}), (\ref{eq:1rn}), and (\ref{eq:1regular}) can be written in a matrix
form as

%\vspace{-0.2cm}
\begin{equation}
\label{eq:linSystem}
A\mathbf{x}^T=(n-1)!\cdot\mathbf{1},
\end{equation}
where $A=(a_{i,j})$ is an $n\times n$ matrix defined by
\begin{footnotesize}
\begin{equation*}
A= \left( \scalemath{0.85}{\begin{array}{ccccccccc}
n-1 & 1 & 0& 0 & \cdots & 0 & 0&\ldots & 0 \\
1& n-2 & 1 &0 & \cdots & 0 & 0& \ldots &0\\
0 & 1 & n-2 & 1 & \cdots & 0 & 0& \ldots &0 \\
\vdots& \vdots & \vdots & \vdots & \ddots & \vdots & \vdots &\vdots  &\vdots \\
\vdots& \vdots & \vdots & \vdots & \ddots & \vdots & \vdots &\vdots  &\vdots \\
0 &\ldots&0 & 0 & \cdots & 1 & n-2 & 1 &0\\
0 &\ldots&0& 0 & \cdots &0 & 1 & n-2 & 1 \\
0 &\ldots&0 & 0 & \cdots  & 0 & 0& 1 & n-1
\end{array}}\right).
\end{equation*}
\end{footnotesize}

Since the sum of every row in $A$ is equal to $n$ it follows that the linear equation system defined in
(\ref{eq:linSystem}) has a solution ${\mathbf{y}^T=\frac{(n-1)!}{n}
\cdot\mathbf{1}}$.
We will show that if $n > 3$ then $A$ is a nonsingular matrix
and hence $\mathbf{y}$ is the unique solution of~(\ref{eq:linSystem}), i.e.
$\mathbf{x}=\mathbf{y}$.
To this end, we need the following theorem known as the Levy-Desplanques Theorem~\cite[p. 125]{HoJo91}.

\begin{theorem}
\label{lem:Gresh}
Let $B=(b_{i,j})$ be an $n\times n$ matrix. If $|b_{i,i}|>\sum_{j\neq i}|b_{i,j}|$
for all $i$, $1\leq i\leq n$, then $B$ is nonsingular.
\end{theorem}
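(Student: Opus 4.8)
The plan is to argue by contradiction, exploiting the strict diagonal dominance together with the triangle inequality; this is the standard argument showing that a strictly diagonally dominant matrix is invertible. Suppose, to the contrary, that $B$ is singular. Then there exists a nonzero vector $\mathbf{v}=(v_1,v_2,\ldots,v_n)^T$ in its kernel, so that $B\mathbf{v}^T=\mathbf{0}$. The first step is to locate the coordinate of largest magnitude: let $k$ be an index for which $|v_k|=\max_{1\leq j\leq n}|v_j|$. Since $\mathbf{v}\neq\mathbf{0}$, we have $|v_k|>0$.

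Next I would write out the $k$th scalar equation of the homogeneous system $B\mathbf{v}^T=\mathbf{0}$, namely $\sum_{j=1}^{n} b_{k,j}v_j=0$, and isolate the diagonal term to get $b_{k,k}v_k=-\sum_{j\neq k}b_{k,j}v_j$. Taking absolute values and applying the triangle inequality then yields
$$|b_{k,k}|\,|v_k|=\left|\sum_{j\neq k}b_{k,j}v_j\right|\leq\sum_{j\neq k}|b_{k,j}|\,|v_j|\leq\left(\sum_{j\neq k}|b_{k,j}|\right)|v_k|,$$
where the last inequality uses the maximality of $|v_k|$, i.e.\ $|v_j|\leq|v_k|$ for every $j$. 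Dividing through by $|v_k|>0$ gives $|b_{k,k}|\leq\sum_{j\neq k}|b_{k,j}|$, which directly contradicts the hypothesis that $|b_{k,i}|>\sum_{j\neq i}|b_{i,j}|$ holds for every row, in particular for row~$k$. Therefore no nonzero kernel vector can exist, and $B$ is nonsingular.

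There is no serious obstacle in this argument; it is short and self-contained. The only point that genuinely drives the proof is the choice of the index $k$ maximizing $|v_j|$: this is precisely what permits pulling $|v_k|$ out of the sum and thereby converts the weak triangle-inequality estimate into a \emph{strict} contradiction with the diagonal-dominance assumption. Everything else is a routine manipulation of the $k$th linear equation.
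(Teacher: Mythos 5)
Your proof is correct and complete. The paper itself gives no proof of this statement: it is quoted as the Levy--Desplanques theorem and cited to Horn and Johnson's \emph{Matrix Analysis}, so there is nothing internal to compare against; the argument you give---take a nonzero kernel vector, select the index $k$ of maximal $|v_k|$, and use the triangle inequality on the $k$th equation to contradict strict diagonal dominance in row $k$---is exactly the classical proof found in that reference. One trivial slip: near the end you wrote the hypothesis as $|b_{k,i}|>\sum_{j\neq i}|b_{i,j}|$, where you clearly mean $|b_{i,i}|>\sum_{j\neq i}|b_{i,j}|$ (applied with $i=k$); this does not affect the argument.
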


For every $n>4$ we have that for each $i$, $1\leq i\leq n$, $a_{i,i}\geq n-2>2\geq \sum_{j\neq i} a_{i,j}$.
Hence, by Theorem~\ref{lem:Gresh} it follows that $A$
is nonsingular. For $n=4$ it can be readily verified that the matrix $A$ is nonsingular.
As a consequence we have that $\mathbf{x}^T=\frac{(n-1)!}{n} \cdot\mathbf{1}$ for every $n \geq 4$.
If $n=4$ or $n$ is a prime greater than $4$ then $\frac{(n-1)!}{n}$ is not an integer and therefore,
a perfect single-error-correcting code does not exist, i.e.

\begin{theorem}
\label{thm:nonexist}
There is no perfect single-error-correcting code in $S_n$, where $n>4$ is a prime or $n=4$.
\end{theorem}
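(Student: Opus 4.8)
The plan is to leverage the entire development the paper has already set up, so almost all of the work is done before we even state Theorem~\ref{thm:nonexist}; the proof itself is a short wrap-up. First I would assume, for contradiction, the existence of a perfect single-error-correcting code $\cC \subset S_n$ and recall the partition of $\cC$ into the pieces $\cC_i = \cC \cap T_{n,i}$ with $x_i = |\cC_i|$. The key structural observation is the covering count: since every permutation of $T_{n,i}$ must lie within Kendall $\tau$-distance one of exactly one codeword, and since such a codeword can only sit in $\cC_{i-1}$, $\cC_i$, or $\cC_{i+1}$, we obtain the three families of linear equations \eqref{eq:1r1}, \eqref{eq:1rn}, and \eqref{eq:1regular}. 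These assemble into the single matrix equation $A\mathbf{x}^T = (n-1)!\cdot\mathbf{1}$ of \eqref{eq:linSystem}, with $A$ the tridiagonal matrix displayed above.

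The crux of the argument is then purely linear-algebraic. Because each row of $A$ sums to $n$, the vector $\mathbf{y}^T = \frac{(n-1)!}{n}\cdot\mathbf{1}$ is visibly \emph{a} solution of \eqref{eq:linSystem}. To conclude that it is the \emph{only} solution --- and hence that $\mathbf{x} = \mathbf{y}$ is forced --- I need $A$ to be nonsingular. For this I invoke the Levy--Desplanques Theorem (Theorem~\ref{lem:Gresh}): for $n > 4$ every diagonal entry satisfies $a_{i,i} \geq n-2 > 2 \geq \sum_{j\neq i} a_{i,j}$, so $A$ is strictly diagonally dominant and therefore nonsingular; the boundary case $n=4$ is checked directly. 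Thus for all $n \geq 4$ we are forced to have $x_i = \frac{(n-1)!}{n}$ for each $i$.

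The finish is an elementary integrality obstruction. Each $x_i = |\cC_i|$ is by definition a nonnegative integer, yet we have just shown $x_i = \frac{(n-1)!}{n}$. When $n = 4$ this equals $\frac{6}{4} = \frac{3}{2}$, and when $n > 4$ is prime, $n$ does not divide $(n-1)!$ (no factor in $1\cdot 2 \cdots (n-1)$ is divisible by the prime $n$), so again $\frac{(n-1)!}{n}$ is not an integer. Either way $x_i$ cannot be an integer, a contradiction, and no perfect single-error-correcting code exists.

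I expect the only genuinely nontrivial ingredient to be the derivation of the covering equations \eqref{eq:1r1}--\eqref{eq:1regular} --- specifically the counts of how many elements of $T_{n,i}$ a codeword covers depending on whether it lies in $\cC_{i-1}\cup\cC_{i+1}$ (exactly one) or in $\cC_i$ itself (namely $n-2$ for an interior $i$, or $n-1$ at the boundary). But this is precisely the content already established in the text preceding the statement, so within this proof it may simply be cited. The remaining steps (diagonal dominance, nonsingularity, and the non-integrality of $\frac{(n-1)!}{n}$ for $n$ prime) are routine, making the overall proof essentially a matter of assembling facts the paper has already prepared.
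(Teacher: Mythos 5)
Your proposal is correct and follows exactly the paper's own argument: the same partition into the sets $\cC_i$, the same covering equations assembled into $A\mathbf{x}^T=(n-1)!\cdot\mathbf{1}$, the same appeal to the Levy--Desplanques theorem (with the $n=4$ case checked directly), and the same integrality contradiction from $x_i=\frac{(n-1)!}{n}$. Since the paper develops all of this in the text immediately preceding the theorem statement, your write-up is essentially a faithful reassembly of that proof, and it is sound.
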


\begin{remark}
It was brought to our attention that Theorem~\ref{thm:nonexist} is a special case of Theorem~5 in \cite{DeSe03}.
However, there is a crucial mistake in the proof of this theorem, which cannot be resolved.
The proof follows by induction on $n$, where the induction step is based on a partition of $S_n$ into ${n\choose k}$ classes, $2\leq k\leq n-2$, according to the set of the $k$ first elements in the permutations. It is stated that if $\cC\subset S_n$ is a code with minimum distance $3$ and $\cC$ is contained in one of these classes, then the projection of $\cC$ into $S_k$ has also minimum distance 3. This argument is clearly wrong. For example, the code $\{[1,2,3,4,5],[3,1,2,5,4]\}$ has minimum distance $3$ and the first three elements in each of its codewords belong to $\{1,2,3\}$. However, its projection into $S_3$ is the code $\{[1,2,3],[3,1,2]\}$, which has minimum distance 2. A similar example can be found for every $n\geq 4$ and for each $2\leq k\leq n-2$.
\end{remark}

The following theorem proved in~\cite{BuzThesis} implies that perfect
single-error-correcting codes must have a very symmetric and uniform structure. This might be useful to rule
out the existence of these codes for other parameters as well. The proof of this theorem is a generalization of the technique used to prove Theorem~\ref{thm:nonexist}. It is omitted here since the theorem is not used in the sequel.

\begin{theorem}
\label{thm:regularity}
Assume that there exists a perfect single-error-correcting code $\cC\subset S_n$,
where $n>11$. If $r<\frac{n}{4}$
then for each sequence of $r$ distinct elements of $[n]$,
$i_1,i_2,\ldots,i_r$, and for each set of $r$ positions, $1\leq j_1<j_2<\ldots<j_r\leq n$,
there are exactly $\frac{(n-r)!}{n}$ codewords $\sigma\in\cC$,
such that $\sigma(j_\ell)=i_\ell$, for each $\ell$, $1\leq \ell \leq r$.
\end{theorem}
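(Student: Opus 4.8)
The plan is to generalize the linear-algebra argument behind Theorem~\ref{thm:nonexist} from a single fixed value to all $r$ prescribed values simultaneously, obtaining one large (non-tridiagonal) system rather than using induction. Fix the ordered tuple of distinct values $(i_1,\dots,i_r)$. For every ordered tuple $\mathbf{p}=(p_1,\dots,p_r)$ of distinct positions in $[n]$, let $z_{\mathbf p}$ be the number of codewords $\sigma\in\cC$ with $\sigma(p_t)=i_t$ for all $t$, and set $\mathcal{P}_{\mathbf p}=\{\pi\in S_n:\pi(p_t)=i_t,\ 1\le t\le r\}$, so $|\mathcal{P}_{\mathbf p}|=(n-r)!$. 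The quantity in the statement is exactly $z_{(j_1,\dots,j_r)}$, so it suffices to prove $z_{\mathbf p}=\frac{(n-r)!}{n}$ for every $\mathbf p$. Since $\cC$ is perfect, each element of $\mathcal{P}_{\mathbf p}$ is covered by exactly one codeword; summing this over $\mathcal{P}_{\mathbf p}$ and then reorganising the total by the covering codeword gives, for each $\mathbf p$, one linear equation $\sum_{\mathbf p'} m_{\mathbf p,\mathbf p'}\,z_{\mathbf p'}=(n-r)!$, where $m_{\mathbf p,\mathbf p'}$ is the number of $\pi\in\mathcal{P}_{\mathbf p}$ that a single codeword in $\mathcal{P}_{\mathbf p'}$ covers. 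This count depends only on the placements $\mathbf p,\mathbf p'$ and not on the chosen codeword, so the system is well defined.

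First I would compute the entries of $M=(m_{\mathbf p,\mathbf p'})$. Writing $S=\{p_1,\dots,p_r\}$ and letting $f(S)$ be the number of adjacent transpositions $(a,a+1)$ with $a,a+1\notin S$, a codeword in $\mathcal{P}_{\mathbf p}$ covers itself together with the $f(S)$ permutations of $\mathcal{P}_{\mathbf p}$ obtained by a transposition disturbing none of the frozen positions; hence the diagonal entry is $m_{\mathbf p,\mathbf p}=1+f(S)$. An off-diagonal entry is nonzero only when $\mathbf p'$ arises from $\mathbf p$ by a single adjacent transposition acting on $S$ — either shifting one coordinate $p_t$ to an adjacent free cell, or exchanging the positions of two values occupying adjacent cells — and in each such case it equals $1$. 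The number of these neighbours is the number $g(S)$ of transpositions that touch $S$, and since every transposition either touches $S$ or does not, $f(S)+g(S)=n-1$. Consequently every row of $M$ sums to $(1+f(S))+g(S)=n$, so $\mathbf z=\frac{(n-r)!}{n}\mathbf 1$ solves the system; it remains to show this solution is unique, i.e. that $M$ is nonsingular.

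To apply the Levy-Desplanques criterion (Theorem~\ref{lem:Gresh}) I would verify strict diagonal dominance $m_{\mathbf p,\mathbf p}>\sum_{\mathbf p'\ne\mathbf p}m_{\mathbf p,\mathbf p'}$, which reads $1+f(S)>g(S)=n-1-f(S)$, equivalently $2f(S)>n-2$. Each frozen position is an endpoint of at most two edges $\{a,a+1\}$, so $g(S)\le 2r$ and hence $f(S)\ge n-1-2r$ for every $S$ of size $r$; the diagonal excess is therefore $m_{\mathbf p,\mathbf p}-\sum_{\mathbf p'\ne\mathbf p}m_{\mathbf p,\mathbf p'}=2f(S)-(n-2)\ge n-4r$, which is strictly positive precisely when $r<\frac n4$. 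This is exactly the hypothesis, and it is where the bound $n/4$ enters. Thus $M$ is nonsingular, $\mathbf z=\frac{(n-r)!}{n}\mathbf 1$ is its unique solution, and in particular $z_{(j_1,\dots,j_r)}=\frac{(n-r)!}{n}$.

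The hard part will be the bookkeeping for the off-diagonal entries: one must check that each transposition touching $S$ produces a distinct valid neighbour $\mathbf p'$ carrying coefficient exactly $1$, handling the cases where two frozen positions are adjacent (so the transposition exchanges two prescribed values rather than shifting one) and where a frozen position sits at the boundary $1$ or $n$ (so it touches only one edge, which only raises $f(S)$ and hence helps). The conceptual core, however, is the uniform row sum $n$ together with the estimate $g(S)\le 2r$: these convert the perfect-code covering identities into a strictly diagonally dominant system exactly under $r<\frac n4$, and the special case $r=1$ recovers the tridiagonal matrix $A$ and Theorem~\ref{thm:nonexist}.
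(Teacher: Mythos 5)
Your proof is correct, and it follows exactly the route the paper intends: the paper actually omits the proof of this theorem (deferring to the thesis \cite{BuzThesis}) and says only that it ``is a generalization of the technique used to prove Theorem~\ref{thm:nonexist},'' which is precisely what you carry out --- the perfect-code covering identities give a square system whose coefficients $m_{\mathbf{p},\mathbf{p}'}$ are well defined (diagonal $1+f(S)$, a $1$ for each of the $g(S)$ transpositions touching $S$, zero otherwise), every row sums to $1+f(S)+g(S)=n$, and the estimate $g(S)\le 2r$ gives strict diagonal dominance exactly when $r<\frac{n}{4}$, so Theorem~\ref{lem:Gresh} forces the unique solution $\frac{(n-r)!}{n}\cdot\mathbf{1}$. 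Your observation that the case $r=1$ recovers the tridiagonal matrix $A$ confirms the generalization is faithful to the paper's technique.
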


For $n=6,8,9,10$, we use similar arguments and obtain systems
of linear equations. We used a computer to show that these systems
have no solutions over the nonnegative integers,
and to conclude that perfect single-error-correcting codes in $S_n$ do not exist for these values of $n$.
More details on these cases can be found in Appendix~A.

\begin{cor}\label{cor:sphere-1bound}
$P(n,3)<(n-1)!$ if $n$ is a prime greater than $4$ or $4\leq n\leq 10$.
\end{cor}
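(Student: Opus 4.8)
The plan is to derive Corollary~\ref{cor:sphere-1bound} as an immediate consequence of the sphere packing bound (Theorem~\ref{thm:sphere_bound}) together with the nonexistence results just established. First I would recall that for a single-error-correcting code, the relevant radius is $R=1$, so the minimum distance is $2R+1=3$. The ball $B$ of radius~$1$ in the Kendall $\tau$-metric centered at any permutation $\sigma\in S_n$ contains $\sigma$ itself together with the $n-1$ permutations obtained from $\sigma$ by a single adjacent transposition $(i,i+1)$, $1\leq i\leq n-1$; since the size of a ball is independent of its center, we have $|B|=n$. Applying Theorem~\ref{thm:sphere_bound} with $\cV=S_n$ and $|S_n|=n!$ gives $P(n,3)\cdot n\leq n!$, that is, the sphere packing bound $P(n,3)\leq (n-1)!$.

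The key observation is that equality $P(n,3)=(n-1)!$ would force the balls of radius~$1$ around the codewords to partition $S_n$ exactly, which is precisely the definition of a perfect single-error-correcting code. Thus a code attaining the bound $P(n,3)=(n-1)!$ exists if and only if a perfect single-error-correcting code exists in $S_n$. I would invoke Theorem~\ref{thm:nonexist}, which rules out such codes when $n>4$ is a prime or $n=4$, together with the computational results reported for $n=6,8,9,10$, which cover the remaining composite values in the range $4\leq n\leq 10$. Combining these, no perfect single-error-correcting code exists for the stated values of $n$, so the sphere packing bound cannot be met with equality, and therefore $P(n,3)<(n-1)!$.

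The argument is essentially a routine packaging of earlier results, so there is no genuine obstacle; the only point requiring minor care is to confirm that the list $n=4$, $n\in\{6,8,9,10\}$, and $n>4$ prime (contributing $n=5,7$ within the range) indeed exhausts every integer with $4\leq n\leq 10$ for which the nonexistence has been proved, ensuring the corollary's hypothesis range is fully justified.
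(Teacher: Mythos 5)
Your proposal is correct and follows essentially the same route as the paper: compute the ball size $n$, apply the sphere packing bound to get $P(n,3)\leq (n-1)!$, observe that equality holds exactly when a perfect single-error-correcting code exists, and then invoke Theorem~\ref{thm:nonexist} together with the computational nonexistence results for $n=6,8,9,10$ to exclude equality.
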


\begin{proof}
The size of a ball with radius one in $S_n$ , when the Kendall $\tau$-metric is used,
is $n$. Hence, by Theorem~\ref{thm:sphere_bound} and the discussion which follows this
theorem we have that, a single-error-correcting code $\cC \subset S_n$ is perfect if and only if
$|\cC|=(n-1)!$. Since such codes do not
exist if $n$ is a prime greater than 4 or if $4\leq n\leq 10$, it follows
that $P(n,3)<(n-1)!$.
\end{proof}

%%%%%%%%%%%%%%%%%%%%%%%%%%%%%%%%%%%%%%%%%5
%%5%%%%%%%%%%%%%%%%%%%%%%%%%%%%%%%%%%%%%
%%%
%%%  Anticodes and Diameter Perfect Codes
%%%%%%%%%%%%%%%%%%%%%%%%%%%%%
%%%%%%%%%%%%%%%%%%%%%%%%%%%%%55

\section{Anticodes and Diameter Perfect Codes}
\label{sec:diameter}

In all the perfect codes of a graphic metric the minimum distance of the code is an
odd integer. If the minimum distance of the code~$C$ is an even
integer then~$C$ cannot be a perfect code. The reason is
that for any two
codewords ${c_1 , c_2 \in C}$ such that $d(c_1,c_2)=2 \delta$, there
exists a word~$x$ such that $d(x , c_1)=\delta$ and
$d(x,c_2)=\delta$. For this case another concept is used, a
diameter perfect code, as was defined in~\cite{AAK01}. This
concept is based on the code-anticode bound presented by
Delsarte~\cite{Delsarte}. An \emph{anticode}~$\cA$ of \emph{diameter}~$D$
in a space~$\cV$ is a subset of words from~$\cV$ such that $d(x,y)
\leq D$ for all $x,y \in \cA$.

\begin{theorem}
\label{thm:abound} If a code $C$, in a space $\cV$ of a distance
regular graph, has minimum distance $d$ and in an anticode~$\cA$
of the space $\cV$ the maximum distance is $d-1$ then $|C |
\cdot |\cA| \leq | \cV |$.
\end{theorem}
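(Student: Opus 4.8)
The plan is to prove this code--anticode bound through the Bose--Mesner algebra of the underlying distance-regular graph, using the MacWilliams-type duality between a subset's inner distribution and its dual distribution. First I would fix notation: let $\cV$ carry the association scheme with distance matrices $A_0=I,A_1,\dots,A_D$, where $(A_i)_{x,y}=1$ if and only if $d(x,y)=i$, let $k_i$ be the valencies (so $A_i\one=k_i\one$, with $k_0=1$), and let $E_0=\tfrac{1}{|\cV|}J,E_1,\dots,E_D$ be the primitive idempotents, with multiplicities $m_j=\operatorname{rank}E_j$ and $m_0=1$; here $\one$ and $J$ denote the all-ones vector and the all-ones matrix. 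The two eigenmatrices $P_i(j),Q_j(i)$ defined by $A_i=\sum_j P_i(j)E_j$ and $E_j=\tfrac{1}{|\cV|}\sum_i Q_j(i)A_i$ satisfy the standard orthogonality relation $\sum_j P_i(j)Q_j(i')=|\cV|\,\delta_{i,i'}$ (immediate from the fact that these are inverse changes of basis) together with the known identity $m_jQ_j(i)=k_iP_i(j)$.

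Next I would attach to $C$ its inner distribution $a_i=\tfrac{1}{|C|}\chi_C^{T}A_i\chi_C$ and its dual distribution $\hat a_j=\sum_i a_iQ_j(i)=\tfrac{|\cV|}{|C|}\chi_C^{T}E_j\chi_C$, where $\chi_C$ is the characteristic vector of $C$. Since each $E_j$ is a symmetric idempotent, $\hat a_j=\tfrac{|\cV|}{|C|}\|E_j\chi_C\|^2\ge 0$, and a direct count gives $a_0=1$ and $\hat a_0=\sum_i a_i=|C|$. I would define $b_i,\hat b_j$ for the anticode $\cA$ in exactly the same way, so that $b_0=1$, $\hat b_0=|\cA|$, and $\hat b_j\ge 0$ for all $j$. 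The minimum-distance hypothesis forces $a_i=0$ for $1\le i\le d-1$, while the anticode hypothesis (maximum distance $d-1$) forces $b_i=0$ for $i\ge d$.

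The heart of the argument is the duality identity $\sum_j m_j\hat a_j\hat b_j=|\cV|\sum_i k_i a_i b_i$. I would derive it by expanding the left-hand side as $\sum_{i,i'}a_ib_{i'}\sum_j m_jQ_j(i)Q_j(i')$, then applying $m_jQ_j(i)=k_iP_i(j)$ and the orthogonality relation to get $\sum_j m_jQ_j(i)Q_j(i')=|\cV|\,k_i\,\delta_{i,i'}$, which collapses the double sum to $|\cV|\sum_i k_i a_i b_i$. Now the support conditions reduce the right-hand sum to a single term: the only index with both $a_i\ne 0$ and $b_i\ne 0$ is $i=0$, so $\sum_i k_i a_i b_i=k_0a_0b_0=1$ and the right-hand side equals $|\cV|$. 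On the left-hand side every summand is nonnegative, so discarding all terms with $j\ge 1$ gives $\sum_j m_j\hat a_j\hat b_j\ge m_0\hat a_0\hat b_0=|C|\cdot|\cA|$. Comparing the two sides yields $|C|\cdot|\cA|\le|\cV|$.

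The main obstacle is not the combinatorics of the support conditions, which are immediate, but the correct setup and invocation of the scheme-theoretic duality: the nonnegativity $\hat a_j,\hat b_j\ge 0$ rests on positive-semidefiniteness of the primitive idempotents, and the displayed identity rests on the two relations between $P$ and $Q$. I would therefore spend most of the write-up justifying that a distance-regular graph yields an association scheme carrying these relations; once that machinery is in place the bound follows in a few lines. I would also stress that this proof genuinely uses distance-regularity, through the eigenmatrix duality, which is precisely why the Kendall $\tau$-metric---whose graph is \emph{not} distance-regular---will instead require the separate translation argument promised in the preceding discussion.
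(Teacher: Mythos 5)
Your choice of route is sound: the Bose--Mesner/MacWilliams duality argument is the standard proof of this bound, and in fact the paper gives no proof of Theorem~\ref{thm:abound} at all --- it simply cites Delsarte, and the proof it does supply (Theorem~\ref{thm:Ahlswede}) is a translation-counting argument crafted for $S_n$ precisely because the Kendall $\tau$-graph fails to be distance-regular. However, your write-up has a concrete error at its core. With your own conventions $A_i=\sum_j P_i(j)E_j$ and $E_j=\frac{1}{|\cV|}\sum_i Q_j(i)A_i$, the eigenmatrix duality reads
\[
m_j P_i(j)=k_i Q_j(i),
\]
not $m_jQ_j(i)=k_iP_i(j)$ as you state: you have swapped $P$ and $Q$. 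As a result, your central identity $\sum_j m_j\hat a_j\hat b_j=|\cV|\sum_i k_i a_i b_i$ is false. Take the complete graph $K_n$ (diameter $1$, $k_1=m_1=n-1$, $P_1(0)=Q_1(0)=n-1$, $P_1(1)=Q_1(1)=-1$) and $C=\cA=\{x\}$ a single vertex, which satisfies all the support hypotheses. Then $a_0=b_0=1$, $a_1=b_1=0$, and $\hat a_j=\hat b_j=Q_j(0)$, so your left-hand side equals $1+(n-1)^3$ while your right-hand side equals $n$; these disagree for every $n\ge 3$. (Your claimed relation already fails at $(i,j)=(1,0)$, where it asserts $1=(n-1)^2$.) The slip is easy to make because the familiar schemes, such as the Hamming scheme, are self-dual, so $P=Q$ and the two relations coincide; but the theorem here concerns arbitrary distance-regular graphs.

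The repair is to invert the weights. From the correct relation in the form $Q_j(i)/m_j=P_i(j)/k_i$ together with your (correct) orthogonality $\sum_j P_i(j)Q_j(i')=|\cV|\,\delta_{i,i'}$, one gets $\sum_j Q_j(i)Q_j(i')/m_j=|\cV|\,\delta_{i,i'}/k_i$, and hence the true duality identity
\[
\sum_j \frac{\hat a_j\hat b_j}{m_j}\;=\;|\cV|\sum_i \frac{a_i b_i}{k_i}.
\]
With this, the rest of your argument goes through verbatim: the support conditions ($a_i=0$ for $1\le i\le d-1$, $b_i=0$ for $i\ge d$) kill every term with $i\ge 1$ on the right, leaving $|\cV|\,a_0b_0/k_0=|\cV|$; on the left every term is nonnegative (your justification $\hat a_j=\frac{|\cV|}{|C|}\left\|E_j\chi_C\right\|^2\ge 0$ is correct), so discarding the terms with $j\ge 1$ gives $\hat a_0\hat b_0/m_0=|C|\cdot|\cA|\le|\cV|$. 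Note that your final inequality happened to be insensitive to the error only because $k_0=m_0=1$; the identity as you stated it is nonetheless wrong and must be corrected for the proof to stand.
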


\vspace{0.2cm}

Theorem~\ref{thm:abound} which was proved in~\cite{Delsarte} is
a generalization of Theorem~\ref{thm:sphere_bound} (the sphere packing bound)
and it can be applied to
the Hamming scheme since the related graph is distance regular (see \cite{BrCo89} for the definition of a distance regular graph).
It cannot be applied to the Kendall $\tau$-metric since the
related graph is not distance regular if $n > 3$. This can be easily verified
by considering the three permutations $\varepsilon=[1,2,3,4,5,\ldots,n]$, $\sigma=[3,1,2,4,5,\ldots,n]$,
and $\pi=[2,1,4,3,5,\ldots,n]$ in~$S_n$. Clearly,
$d_K(\varepsilon,\sigma)=d_K(\varepsilon,\pi)=2$ and
there exists exactly one permutation~$\alpha$ for which
$d_K (\varepsilon,\alpha)=1$ and $d_K (\alpha,\sigma)=1$, while there exist exactly two permutations
$\beta,\gamma$ for which $d_K (\varepsilon,\beta)=1$, $d_K (\beta,\pi)=1$,
$d_K (\varepsilon,\gamma)=1$, and $d_K (\gamma,\pi)=1$.
Fortunately, an alternative proof which was given in~\cite{AAK01}
and was modified in~\cite{Etz11}
will work for the Kendall $\tau$-metric.

\begin{theorem}
\label{thm:Ahlswede}
Let $\cC_{\cD}$ be a code in $S_n$
with Kendall $\tau$-distances between codewords taken from a set
$\cD$. Let $\cA \subset S_n$ and let $\cC'_{\cD}$ be the largest
code in $\cA$ with Kendall $\tau$-distances between codewords taken from the
set $\cD$. Then
$$
\frac{|\cC_{\cD}|}{n!} \leq \frac{|\cC'_{\cD}|}{|\cA|} ~.
$$
\end{theorem}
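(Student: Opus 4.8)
The plan is to exploit the right invariance of the Kendall $\tau$-metric to manufacture $n!$ isometric copies of the anticode $\cA$ and then run a double-counting argument averaged over all of them, following the approach of~\cite{AAK01}.

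First I would, for each $\rho\in S_n$, form the right translate $\cA\circ\rho\deff\{\alpha\circ\rho:\alpha\in\cA\}$. Since the metric is right invariant we have $d_K(\alpha_1\circ\rho,\alpha_2\circ\rho)=d_K(\alpha_1,\alpha_2)$ for all $\alpha_1,\alpha_2\in\cA$, so the map $\alpha\mapsto\alpha\circ\rho$ is a bijective isometry. Consequently $\cA\circ\rho$ is again a subset of $S_n$ of the same cardinality $|\cA|$ carrying exactly the same multiset of pairwise distances as $\cA$.

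Next I would bound $|\cC_{\cD}\cap(\cA\circ\rho)|$ for each fixed $\rho$. Every pair of distinct elements in this intersection lies in $\cC_{\cD}$, hence has distance in $\cD$. Translating the intersection back by $\rho^{-1}$ leaves all pairwise distances unchanged (right invariance again) and lands it inside $\cA$, producing a code in $\cA$ with distances from $\cD$. By the maximality of $\cC'_{\cD}$ and the fact that right multiplication by $\rho^{-1}$ is a bijection,
$$
|\cC_{\cD}\cap(\cA\circ\rho)|=\bigl|(\cC_{\cD}\cap(\cA\circ\rho))\circ\rho^{-1}\bigr|\leq|\cC'_{\cD}|.
$$

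Finally I would sum this inequality over all $\rho\in S_n$ and evaluate the left-hand side by counting incidences in two ways. For a fixed codeword $c\in\cC_{\cD}$, one has $c\in\cA\circ\rho$ exactly when $\rho=\alpha^{-1}\circ c$ for some $\alpha\in\cA$; since $\alpha\mapsto\alpha^{-1}\circ c$ is injective, this identifies precisely $|\cA|$ values of $\rho$. Therefore $\sum_{\rho\in S_n}|\cC_{\cD}\cap(\cA\circ\rho)|=|\cC_{\cD}|\cdot|\cA|$, while the summed right-hand side is $n!\cdot|\cC'_{\cD}|$, giving $|\cC_{\cD}|\cdot|\cA|\leq n!\cdot|\cC'_{\cD}|$, which rearranges to the claimed bound. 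The step demanding care is the choice of translation side: because the Kendall $\tau$-metric is right invariant but not left invariant, the copies must be generated by right multiplication, and the same scheme with left translates would fail to be isometric.
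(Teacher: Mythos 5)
Your proof is correct and is essentially the paper's own argument: summing $|\cC_{\cD}\cap(\cA\circ\rho)|$ over all $\rho\in S_n$ is the same double counting the paper performs on the incidence set $\{(\sigma,\pi):\sigma\in\cC_{\cD},\ \sigma\circ\pi\in\cA\}$, up to the substitution $\pi=\rho^{-1}$ (translating the anticode rather than the code). Both rely on the same two facts — right invariance makes right translation an isometry, and maximality of $\cC'_{\cD}$ caps each intersection at $|\cC'_{\cD}|$ — so no further comparison is needed.
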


\begin{proof}
Let $\mathcal{B} \deff \{ (\sigma, \pi) ~:~ \sigma \in \cC_{\cD} , ~ \pi \in S_n,~ \sigma \circ \pi
\in \cA \}$. For a given codeword $\sigma\in \cC_{\cD}$ and a word
$\alpha \in \cA$, there is exactly one element $\pi\in
S_n$ such that $\alpha =\sigma\circ \pi$. Therefore, $| \mathcal{B} | = |\cC_{\cD}|
\cdot | \cA |$.

Since the Kendall $\tau$-metric is right invariant it follows that for every $\pi\in S_n$,
the set $\cC_{\pi}\deff\{\sigma\circ \pi~:~\sigma\in \cC_{\cD}\}$ has the same Kendall $\tau$-distances as in $\cC_{\cD}$, i.e.
the Kendall $\tau$-distances between codewords of $\cC_{\pi}$ are taken
from the set $\cD$.
Together with the fact that $\cC'_{\cD}$ is the largest code in $\cA$, with Kendall $\tau$-distances
between codewords taken from the set~$\cD$, it follows that for
any given word $\pi \in S_n$ the set $\{ \sigma ~:~ \sigma \in \cC_{\cD} ,
~\sigma\circ \pi \in \cA  \}$ has at most~$|\cC'_{\cD}|$ codewords. Hence, $|\mathcal{B}|
\leq |\cC'_{\cD}| \cdot n!$.

Thus, since $|\cB|=|\cC_{\cD}|\cdot |\cA|$, we have that $|\cC_{\cD}| \cdot | \cA | \leq |\cC'_{\cD}| \cdot n!$ and
the claim is proved.
\end{proof}

\begin{cor}
\label{cor:anti_Kendall} If a code $\cC \subseteq S_n$ has minimum Kendall
$\tau$-distance~$d$ and in an anticode $\cA \subset S_n$ the maximum
Kendall $\tau$-distance is~$d-1$ then $| \cC | \cdot |\cA| \leq n!$.
\end{cor}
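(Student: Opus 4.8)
The plan is to derive this corollary directly from \Tref{thm:Ahlswede} by making a single judicious choice of the distance set $\cD$. First I would set $\cD\deff\{d,d+1,\ldots,{n\choose 2}\}$, the set of all attainable Kendall $\tau$-distances that are at least $d$. Since $\cC$ has minimum Kendall $\tau$-distance $d$, every pair of distinct codewords of $\cC$ is at a distance lying in $\cD$; hence $\cC$ is precisely a code of the form $\cC_{\cD}$ to which \Tref{thm:Ahlswede} applies, and $|\cC_{\cD}|=|\cC|$.

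The crux is to identify $\cC'_{\cD}$, the largest subcode of the anticode $\cA$ whose pairwise distances all lie in $\cD$. Because $\cA$ has maximum Kendall $\tau$-distance $d-1$, any two distinct elements of $\cA$ are at distance at most $d-1$, and such a value is not in $\cD$. Consequently no two distinct elements of $\cA$ can simultaneously belong to a code with distances restricted to $\cD$, so $\cC'_{\cD}$ contains at most one permutation. A single permutation vacuously meets the distance constraint, whence $|\cC'_{\cD}|=1$.

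Substituting $|\cC_{\cD}|=|\cC|$ and $|\cC'_{\cD}|=1$ into the inequality supplied by \Tref{thm:Ahlswede} yields
$$
\frac{|\cC|}{n!}\leq\frac{1}{|\cA|},
$$
which rearranges immediately to $|\cC|\cdot|\cA|\leq n!$, as required.

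I do not expect a genuinely hard step in this argument; its entire content is the correct choice of $\cD$ together with the elementary but essential observation that an anticode of diameter $d-1$ admits no two points at a distance in $\cD$, which forces $|\cC'_{\cD}|=1$. The only point that warrants a moment's care is confirming that ${n\choose 2}$ is indeed the largest possible Kendall $\tau$-distance, so that $\cD$ genuinely captures all distances at least $d$; this follows from the remark preceding \Lref{lem:reverse}.
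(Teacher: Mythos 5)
Your proposal is correct and follows exactly the paper's own argument: the same choice $\cD=\{d,d+1,\ldots,{n\choose 2}\}$, the same observation that the largest code inside $\cA$ with distances in $\cD$ is a single permutation, and the same application of Theorem~\ref{thm:Ahlswede}. Your write-up merely spells out in more detail why $|\cC'_{\cD}|=1$, which the paper dismisses with ``clearly.''
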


\begin{proof}
Let $\cD = \{ d,d+1 , \ldots , {n\choose 2} \}$
and let $\cC_{\cD}\subseteq S_n$ be a code with minimum Kendall
$\tau$-distance $d$. Let $\cA$ be a subset of $S_n$ with Kendall $\tau$-distances
between words of $\cA$ taken from the set $\{ 1,2, \ldots , d-1
\}$, i.e. $\cA$ is an anticode with diameter $d-1$. Clearly, the
largest code in $\cA$ with Kendall $\tau$-distances from $\cD$ has only one
codeword. Applying Theorem~\ref{thm:Ahlswede} on $\cD$,
$\cC_{\cD}$, and $\cA$, implies that $|\cC_{\cD}| \cdot | \cA | \leq n!$.
\end{proof}

If there exists a code $\cC\subseteq S_n$ with minimum Kendall $\tau$-distance $d=D+1$
and an anticode $\cA$ with diameter~$D$ such that $|\cC|\cdot|\cA|=n!$ then
$\cC$ is called a \emph{$D$-diameter perfect} code. In this case, $\cA$ must be an
anticode with maximum distance (diameter)~$D$ of the largest possible size, and $\cA$ is called an
\emph{optimal} anticode of diameter~$D$. If $D=2R$ and the ball of radius $R$ is an optimal
anticode then a $D$-diameter perfect code is a perfect
$R$-error-correcting code. It is interesting to find
the optimal anticodes in $S_n$ and to determine their sizes. Using the sizes of such optimal anticodes we can
obtain by Corollary \ref{cor:anti_Kendall} upper bounds on $P(n,2\delta)$.
In the rest of this section we will mostly consider bounds on the size of optimal anticodes and use these bounds to obtain new upper bounds on $P(n,2\delta)$.
The proof of the next theorem is given in~Appendix~B.

\begin{theorem}\label{thm:diameter2Anticode}
Every optimal anticode with diameter 2 (using the Kendall $\tau$-distance) in~$S_n$, $n\geq 5$,
is a ball with radius one whose size is $n$.
\end{theorem}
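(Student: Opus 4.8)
The plan is to use the right-invariance of the metric to put the anticode in a normal form, and then recast the diameter condition as a purely combinatorial condition on inversion sets, where the extremal count becomes elementary. First, since the Kendall $\tau$-metric is right invariant, for any fixed $\sigma_0\in\cA$ the translate $\cA\circ\sigma_0^{-1}$ is an anticode of the same size and the same diameter, and it contains the identity $\varepsilon$. Because right translation also carries balls of radius one onto balls of radius one, it suffices to prove the claim under the assumption $\varepsilon\in\cA$. Since the diameter is $2$, every $\sigma\in\cA$ then satisfies $w_K(\sigma)=d_K(\varepsilon,\sigma)\le 2$, so $\cA$ consists only of permutations of Kendall weight $0$, $1$, or $2$.

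Next I would pass to inversion sets. For $\sigma\in S_n$ let $\mathrm{Inv}(\sigma)$ be its set of inversions, regarded as a set of $2$-subsets of $[n]$; then \eq{eq:distFormula} reads $d_K(\sigma,\pi)=|\mathrm{Inv}(\sigma)\triangle\mathrm{Inv}(\pi)|$ and $w_K(\sigma)=|\mathrm{Inv}(\sigma)|$. Under this dictionary the weight-$1$ permutations are exactly those with $\mathrm{Inv}(\sigma)=\{\{i,i+1\}\}$ (the $n-1$ adjacent transpositions), and the weight-$2$ permutations are exactly those whose inversion set is one of the admissible $2$-element sets: a pair of disjoint consecutive transpositions $\{\{i,i+1\},\{j,j+1\}\}$ with $|i-j|\ge 2$, or a cyclic pair $\{\{i,i+1\},\{i,i+2\}\}$ or $\{\{i,i+2\},\{i+1,i+2\}\}$. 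The diameter-$2$ requirement $|\mathrm{Inv}(\sigma)\triangle\mathrm{Inv}(\pi)|\le 2$ then splits into three transparent compatibility rules: two weight-$1$ elements are always compatible; a weight-$1$ element $\{x\}$ and a weight-$2$ element $B$ are compatible if and only if $x\in B$; and two weight-$2$ elements are compatible if and only if they intersect.

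Finally I would solve the resulting extremal problem. Write $a$ and $b$ for the numbers of weight-$1$ and weight-$2$ members of $\cA$, so that $|\cA|=1+a+b$. If $b=0$ then $a\le n-1$, with equality exactly when $\cA$ is the ball of radius one about $\varepsilon$. If $b\ge 1$, fix a weight-$2$ member $B$: the second rule forces every weight-$1$ element to lie in $B$, so $a\le 2$, and if $b\ge 2$ the weight-$2$ members form a pairwise-intersecting family of $2$-sets. Such a family either shares one common pair $x^{\ast}$ (a star) or consists of at most three sets forming a triangle $\{\{x,y\},\{y,z\},\{x,z\}\}$. A genuine triangle has empty common intersection, which forces $a=0$ and hence $|\cA|\le 4<n$; the case $b=1$ likewise gives $|\cA|\le 4<n$. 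In the star case, a fixed pair of gap two lies in only the two cyclic admissible sets, and a pair of larger gap in none, so $b$ can be large only for a star whose center $x^{\ast}$ is a consecutive pair $\{i,i+1\}$; there are exactly $n-2$ admissible weight-$2$ sets containing such a pair (equivalently, the $n-2$ weight-$2$ neighbours of the weight-$1$ permutation with inversion $x^{\ast}$). Hence $b\le n-2$, and since the only eligible weight-$1$ element is $\{x^{\ast}\}$ itself, we get $|\cA|\le n$, with equality precisely when $\cA=\{\varepsilon\}\cup\{x^{\ast}\}$ together with all $n-2$ admissible supersets of $x^{\ast}$, i.e. the ball of radius one centered at the weight-$1$ permutation with inversion $x^{\ast}$.

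The step I expect to be the main obstacle is the last one: one must pin down the exact count that a fixed consecutive pair lies in exactly $n-2$ admissible weight-$2$ inversion sets (this is what ties the optimum to $n$ and singles out balls), and then rule out every non-ball configuration, in particular the triangles and the non-consecutive-center stars. This is precisely where the hypothesis $n\ge 5$ enters, since for $n\ge 5$ all these alternative configurations stay strictly below $n$, whereas for small $n$ they could tie.
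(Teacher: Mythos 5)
Your proposal is correct, and it takes a genuinely different route from the paper's own proof. The paper (Appendix~B) starts the same way---right invariance lets it assume $\varepsilon\in\cA$, so every member has weight at most two---but from there it works directly with permutation compositions: Lemma~\ref{lem:neighbors} computes, case by case from the distance formula, which weight-2 permutations lie at distance~2 from a ``cyclic'' one $(i,i+1)\circ(i+1,i+2)$; Lemma~\ref{lem:ii+1i+1i+2} shows the two cyclic types with the same $i$ jointly exclude all other weight-2 permutations; Lemma~\ref{lem:weight2} concludes that four or more weight-2 members force containment in a ball around a weight-1 permutation; and the final argument splits into five cases according to the \emph{number of weight-1 members} of $\cA$. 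Your dictionary $d_K(\sigma,\pi)=\left|\mathrm{Inv}(\sigma)\,\triangle\,\mathrm{Inv}(\pi)\right|$ reorganizes the whole argument around the \emph{structure of the weight-2 family} instead: the three compatibility rules become cardinality statements about symmetric differences, and the extremal step is the classical star-versus-triangle dichotomy for pairwise-intersecting $2$-sets, capped by the count of $n-2$ admissible supersets of a consecutive pair. What your approach buys: the parity (bipartiteness) arguments that the paper invokes repeatedly via Lemma~\ref{lem:biGraph} are absorbed automatically into the symmetric-difference formalism, and the equality analysis falls out naturally, identifying the extremal configurations as exactly the two kinds of balls. What the paper's approach buys: it stays entirely inside machinery it has already set up, whereas you incur one genuine proof obligation that you state but do not discharge---the classification of admissible $1$- and $2$-element inversion sets (it follows from transitivity of inversion sets, and plays the same role as the paper's case analysis over products of two adjacent transpositions, so it must actually be proved, not just asserted).

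One step you should make airtight: in the star case with a gap-two center $x^*=\{i,i+2\}$ you get $b\le 2$, and you need to say explicitly that then $a=0$, because the common intersection of the two cyclic sets is $\{x^*\}$, which is not a consecutive pair and hence corresponds to no weight-1 permutation. Without that observation, the crude bounds $a\le 2$ and $b\le 2$ would allow $|\cA|=5$, which ties the ball precisely at $n=5$. Your own compatibility rules yield the needed conclusion in one line, so this is a presentational gap rather than a mathematical one, but at $n=5$ it is exactly the margin on which the theorem lives.
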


We will now consider lower bounds on the size of optimal anticodes with odd diameter. These bounds will imply new lower bounds on $P(n,2\delta)$.
To this end we will define a double ball of radius $R$.
For a given space $\cV$ with a distance measure $d(\cdot,\cdot)$ and for two elements $x,y\in \cV$ such that $d(x,y)=1$, the \emph{double ball} of radius $R$ centered at $x$ and $y$ is defined by $DB(x,y,R)\deff B(\mathbf{x},R)\cup B(\mathbf{y},R)$. Let $B_{n,R}$ be a ball of radius $R$ in $S_n$. W.l.o.g., we may assume that $B_{n,R}=B(\varepsilon,R)$. For every $n\geq 1$ and $R\geq 0$, we denote by $DB_{n,R}$ the double ball of radius $R$ in $S_n$ centered at the identity permutation $\varepsilon$ and the permutation $(1,2)$.

\begin{lemma}
\label{lem:oddAnticode}
Let $\cV$ be a space with a distance measure $d(\cdot,\cdot)$.
For every $x,y\in \cV$ such that $d(x,y)=1$ we have
\begin{itemize}
\item[(1)] $DB(x,y,R)$ is an anticode of diameter at most $2R+1$.

\item[(2)] $|DB(x,y,R)|=|B(x,R)|+|B(y,R)|-|B(x,R)\cap B(y,R)|$.

\item[(3)] If $d(\cdot,\cdot)$ over $\cV$ is bipartite then
$B(x,R)\cap B(y,R)=DB(x,y,R-1)$.

\end{itemize}
\end{lemma}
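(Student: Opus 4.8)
The plan is to prove the three parts in order, with parts (1) and (2) being essentially definitional and part (3) requiring the bipartite hypothesis.

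For part (1), I would argue that the diameter of $DB(x,y,R)$ is at most $2R+1$ by the triangle inequality. Take any two points $u,v\in DB(x,y,R)=B(x,R)\cup B(y,R)$. If both lie in the same ball, say $B(x,R)$, then $d(u,v)\leq d(u,x)+d(x,v)\leq 2R\leq 2R+1$. If they lie in different balls, say $u\in B(x,R)$ and $v\in B(y,R)$, then using $d(x,y)=1$ we get $d(u,v)\leq d(u,x)+d(x,y)+d(y,v)\leq R+1+R=2R+1$. Thus in every case the distance is at most $2R+1$, which is exactly the claim.

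Part (2) is just the inclusion-exclusion principle for the cardinality of a union of two finite sets, applied to $DB(x,y,R)=B(x,R)\cup B(y,R)$. I would state
$$
|B(x,R)\cup B(y,R)|=|B(x,R)|+|B(y,R)|-|B(x,R)\cap B(y,R)|
$$
and observe that this is the assertion.

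Part (3) is the substantive step and the one requiring the bipartite hypothesis. I want to show $B(x,R)\cap B(y,R)=DB(x,y,R-1)=B(x,R-1)\cup B(y,R-1)$. The inclusion $DB(x,y,R-1)\subseteq B(x,R)\cap B(y,R)$ is easy: if $z\in B(x,R-1)$ then $d(z,x)\leq R-1\leq R$ so $z\in B(x,R)$, and also $d(z,y)\leq d(z,x)+d(x,y)\leq R-1+1=R$ so $z\in B(y,R)$; the symmetric argument handles $z\in B(y,R-1)$. For the reverse inclusion, suppose $z\in B(x,R)\cap B(y,R)$, so $d(z,x)\leq R$ and $d(z,y)\leq R$. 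Here is where bipartiteness enters: since $d(x,y)=1$, the defining congruence $d(z,x)+d(x,y)\equiv d(z,y)\pmod 2$ gives $d(z,x)+1\equiv d(z,y)\pmod 2$, so $d(z,x)$ and $d(z,y)$ have opposite parities. Therefore they cannot both equal $R$, meaning at least one of them is at most $R-1$; that is, $z\in B(x,R-1)$ or $z\in B(y,R-1)$, hence $z\in DB(x,y,R-1)$. The main obstacle is recognizing that bipartiteness is exactly what forbids $d(z,x)=d(z,y)=R$ simultaneously, since without it a point could be at distance exactly $R$ from both centers and fail to lie in the smaller double ball.
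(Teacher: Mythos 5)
Your proof is correct and follows essentially the same route as the paper: parts (1) and (2) by the triangle inequality and inclusion--exclusion, and part (3) by double inclusion, where the forward direction uses the triangle inequality with $d(x,y)=1$ and the reverse direction uses bipartiteness to rule out $d(z,x)=d(z,y)=R$. The only cosmetic difference is that the paper phrases the parity step as a contradiction while you state it directly as an opposite-parity observation; the content is identical.
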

\begin{proof}
$(1)$ follows immediately from the triangle inequality and $(2)$ is trivial.

If $z\in B(x,R)\cap B(y,R)$ then $d(x,z)\leq R$ and $d(y,z)\leq R$.
Assume that $d(\cdot,\cdot)$ is bipartite, i.e. every three elements $\hat{x},\hat{y},\hat{z}\in \cV$ satisfies the equation $d(\hat{x},\hat{y})+d(\hat{y},\hat{z})\equiv d(\hat{x},\hat{z})~(\hbox{mod }2)$.
If $d(x,z)=d(y,z)=R$ then $d(x,y)+d(y,z)\not \equiv d(x,z)~(\hbox{mod }2)$, a contradiction.
Hence, $d(x,z)\leq R-1$ or $d(y,z)\leq R-1$ and therefore, $z\in DB(x,y,R-1)$.

On the other hand, if $z\in DB(x,y,R-1)$ then $d(x,z)\leq R-1$ or $d(y,z)\leq R-1$ and since $d(x,y)=1$ it follows from the triangle inequality that $d(x,z)\leq R$ and $d(y,z)\leq R$. Therefore, $z\in B(x,R)\cap B(y,R)$.

Thus, $z \in B(x,R)\cap B(y,R)$ if and only if $z \in DB(x,y,R-1)$, i.e. $B(x,R)\cap B(y,R)=DN(x,y,R-1)$.
\end{proof}

\begin{cor}\label{cor:sizeDS}
$|DB_{n,R}|=2|B_{n,R}|-|DB_{n,R-1}|$.
\end{cor}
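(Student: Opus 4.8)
The plan is to apply \Lref{lem:oddAnticode} directly with $x=\varepsilon$ and $y=(1,2)$ in the space $S_n$ equipped with the Kendall $\tau$-metric. First I would check that the hypothesis of that lemma is met: the permutation $(1,2)$ is $[2,1,3,\ldots,n]$, which is obtained from $\varepsilon=[1,2,\ldots,n]$ by a single adjacent transposition, so $d_K(\varepsilon,(1,2))=1$. Thus the double ball $DB_{n,R}=DB(\varepsilon,(1,2),R)$ is well defined and \Lref{lem:oddAnticode} is applicable to it.

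Next I would invoke part $(2)$ of \Lref{lem:oddAnticode}, which gives
$$|DB_{n,R}|=|B(\varepsilon,R)|+|B((1,2),R)|-|B(\varepsilon,R)\cap B((1,2),R)|.$$
Since the Kendall $\tau$-metric is right invariant, the size of a ball does not depend on its center, so $|B(\varepsilon,R)|=|B((1,2),R)|=|B_{n,R}|$. This collapses the first two terms to $2|B_{n,R}|$.

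It then remains to identify the intersection term with $|DB_{n,R-1}|$. Here I would use that the Kendall $\tau$-metric over $S_n$ is bipartite, which is exactly \Lref{lem:biGraph}. With this hypothesis in hand, part $(3)$ of \Lref{lem:oddAnticode} yields $B(\varepsilon,R)\cap B((1,2),R)=DB(\varepsilon,(1,2),R-1)=DB_{n,R-1}$, so that $|B(\varepsilon,R)\cap B((1,2),R)|=|DB_{n,R-1}|$. Substituting into the displayed identity gives the claimed recurrence $|DB_{n,R}|=2|B_{n,R}|-|DB_{n,R-1}|$.

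Since all three parts of \Lref{lem:oddAnticode} are already established, there is no genuine obstacle in this argument; it is a direct specialization of that lemma to the concrete centers $\varepsilon$ and $(1,2)$. The only points that must be verified are the two hypotheses feeding into it, namely that these centers are at Kendall $\tau$-distance exactly $1$ and that the metric is bipartite (the latter being required for part $(3)$), and both are immediate from the earlier results.
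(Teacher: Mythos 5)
Your proof is correct and follows essentially the same route as the paper's: part (2) of Lemma~\ref{lem:oddAnticode} gives $|DB_{n,R}|=2|B_{n,R}|-|B(\varepsilon,R)\cap B((1,2),R)|$, and part (3) (via bipartiteness, Lemma~\ref{lem:biGraph}) identifies the intersection with $DB_{n,R-1}$. Your version is slightly more careful than the paper's, explicitly verifying $d_K(\varepsilon,(1,2))=1$ and invoking right invariance to equate the two ball sizes, but these are details the paper leaves implicit rather than a different argument.
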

\begin{proof}
By Lemma~\ref{lem:oddAnticode} (2) we have
$|DB_{n,R}|=2|B_{n,R}|-|B(\varepsilon,R)\cap B((1,2),R)|$. By Lemma~\ref{lem:oddAnticode} (3) we have that $|B(\varepsilon,R)\cap B((1,2),R)|=DB_{n-1,R}$. Thus,
$|DB_{n,R}|=2|B_{n,R}|-|DB_{n,R-1}|$.
\end{proof}

\begin{theorem}
\label{thm:diameter3Anticode}
If $n\geq 4$ then
$ DB_{n,1}$
is an optimal anticode of diameter 3, whose size is ${2(n-1)}$.
\end{theorem}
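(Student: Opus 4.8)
The goal is to show $DB_{n,1}$ is an optimal diameter-3 anticode of size $2(n-1)$. The plan splits naturally into two parts: computing the size, and proving optimality.

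First I would compute $|DB_{n,1}|$. By Corollary~\ref{cor:sizeDS} we have $|DB_{n,R}|=2|B_{n,R}|-|DB_{n,R-1}|$, so $|DB_{n,1}|=2|B_{n,1}|-|DB_{n,0}|$. The ball of radius~1 in $S_n$ has size $n$ (stated just before Corollary~\ref{cor:anti_Kendall}, and used in Corollary~\ref{cor:sphere-1bound}), so $|B_{n,1}|=n$. The double ball of radius~$0$ is $B(\varepsilon,0)\cup B((1,2),0)=\{\varepsilon,(1,2)\}$, which has size~$2$ since $d_K(\varepsilon,(1,2))=1\neq 0$. Hence $|DB_{n,1}|=2n-2=2(n-1)$, and by Lemma~\ref{lem:oddAnticode}~(1) this set is an anticode of diameter at most~$3$. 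That $DB_{n,1}$ genuinely has diameter~$3$ (and not less) follows because, e.g., the reverse-type pair at mutual distance~$3$ inside the double ball is realized; but for the bound we only need diameter at most~$3$.

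The substantive part is optimality: no anticode of diameter~$3$ in $S_n$ exceeds $2(n-1)$. The approach I would take is to exploit bipartiteness of the Kendall $\tau$-metric (Lemma~\ref{lem:biGraph}). An anticode $\cA$ of diameter~$3$ may contain pairs at distances $1,2,3$. I would partition $\cA$ by weight parity into $\cA_{\text{even}}$ and $\cA_{\text{odd}}$: by bipartiteness any two permutations of the \emph{same} parity are at even distance, hence at distance exactly~$2$ within a diameter-3 anticode. So each parity class is an anticode of diameter~$2$. By Theorem~\ref{thm:diameter2Anticode}, every diameter-2 anticode in $S_n$ (for $n\geq 5$) is a ball of radius~1, hence has size at most~$n$; for $n=4$ the bound is handled directly. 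A naive bound then gives $|\cA|\leq 2n$, which is too weak. The main obstacle is to improve $2n$ down to $2(n-1)$, and this is where the diameter-2 structure must be used more carefully.

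To close the gap I would argue that the two parity classes cannot \emph{both} be full radius-1 balls centered at points at mutual distance~$1$ without forcing a distance-4 pair, contradicting diameter~$3$. Concretely, suppose $\cA_{\text{even}}=B(u,1)$ and $\cA_{\text{odd}}=B(v,1)$ are both radius-1 balls of size~$n$; since cross-parity distances are odd and at most~$3$, one checks $d_K(u,v)\in\{1,3\}$, and in either case there exist a neighbor of $u$ and a neighbor of $v$ whose distance exceeds~$3$ (using the explicit adjacency structure of $G_n$ and the distance formula~(\ref{eq:distFormula})), the contradiction. Hence at least one parity class falls short of a full ball, giving $|\cA|\leq n+(n-1)=2(n-1)$. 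Since $DB_{n,1}$ attains this, it is optimal. The delicate point, and the step I expect to be hardest, is verifying that two size-$n$ balls of opposite parity always produce an out-of-range distance; I would prove this by reducing to the centers' relative position and invoking Lemma~\ref{lem:reverse} together with~(\ref{eq:distFormula}) to exhibit the offending pair explicitly.
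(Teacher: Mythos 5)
Your size computation and the reduction to parity classes are sound and match the opening of the paper's proof: split a diameter-3 anticode $\cA$ into $\cA_e$ and $\cA_o$ by Kendall $\tau$-weight parity, observe via Lemma~\ref{lem:biGraph} that each class is an anticode of diameter~2, and invoke Theorem~\ref{thm:diameter2Anticode}. But your plan for closing the gap from $2n$ to $2(n-1)$ has two genuine defects. First, an arithmetic one: even if your proposed ``hard step'' succeeded in showing that the two classes cannot \emph{both} be full radius-1 balls, you would only conclude $|\cA|\leq n+(n-1)=2n-1$, which is one more than the claimed bound $2(n-1)=2n-2$. To reach $2(n-1)$ you need \emph{both} classes to have size at most $n-1$, not just one of them, so ruling out the simultaneous occurrence of two full balls cannot finish the proof.

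Second, the step you flag as the delicate one (two full balls with opposite-parity centers forcing a pair at distance greater than~3) is not the right mechanism, and it is never needed. The decisive observation --- which is exactly what the paper uses --- is that \emph{neither} parity class can be a radius-1 ball at all: a ball $B(\sigma,1)$ consists of $\sigma$ together with its $n-1$ neighbors, and by Lemma~\ref{lem:biGraph} those neighbors have weight parity opposite to that of $\sigma$, so every radius-1 ball contains permutations of both parities, whereas $\cA_e$ and $\cA_o$ are parity-homogeneous by construction. Since Theorem~\ref{thm:diameter2Anticode} says that for $n\geq 5$ a diameter-2 anticode attains size $n$ \emph{only if} it is a radius-1 ball, it follows immediately that $|\cA_e|\leq n-1$ and $|\cA_o|\leq n-1$, hence $|\cA|\leq 2(n-1)$, with the case $n=4$ checked directly. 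This one-line parity argument replaces your entire proposed case analysis on the centers' relative positions, which as written would not close the proof.
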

\begin{proof}
The claim can be easily verified for $n=4$.
By the first part of Lemma \ref{lem:oddAnticode} and by Corollary~\ref{cor:sizeDS}
it follows that $DB_{n,1}$ is an anticode of diameter 3
and size $2(n-1)$.

Let $\cA$ be an optimal anticode of diameter 3 in $S_n$, where $n\geq 5$, and let
$$
\cA_e=\{\sigma\in \cA~:~w_K(\sigma)\equiv 0~(\hbox{mod }2)\},~~~
\cA_o=\{\sigma\in \cA~:~w_K(\sigma)\equiv 1~(\hbox{mod }2)\}.
$$
Since the Kendall $\tau$-metric is bipartite, it follows that $\cA_e$ and $\cA_o$ are anticodes of diameter 2.
If $n\geq 5$ then by Theorem \ref{thm:diameter2Anticode} it follows that $|\cA_e|\leq n$ ($|\cA_o|\leq n$, respectively) and
$|\cA_e|=n$ ($|\cA_0|=n$, respectively) if and only if $\cA_e$ ($\cA_0$, respectively) is a ball of radius one.
The anticodes $\cA_e$ and $\cA_o$ cannot be balls of radius one and therefore,
$|\cA_e|\leq n-1$ and $|\cA_o|\leq n-1$.
Thus, $|\cA|=|\cA_e|+|\cA_o|\leq 2(n-1)$, for $n\geq 5$.
\end{proof}

As a consequence of
Corollary \ref{cor:anti_Kendall} and the fact that $DB_{n,R}$ is an anticode of diameter $2R+1$ we have the following upper bound on $P(n,2\delta)$, which generally considerably improves the known upper bounds.

\begin{cor}\label{cor:boundGeneral}
$$
P(n,2(R+1))\leq \frac{n!}{|DB_{n,R}|}.
$$
\end{cor}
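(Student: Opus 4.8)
The plan is to read off the result directly from the code–anticode bound of Corollary~\ref{cor:anti_Kendall} together with the diameter estimate for double balls supplied by Lemma~\ref{lem:oddAnticode}(1). Recall that $DB_{n,R}$ is, by definition, the double ball $DB(\varepsilon,(1,2),R)$, and that $d_K(\varepsilon,(1,2))=1$. Hence Lemma~\ref{lem:oddAnticode}(1) applies to the pair $x=\varepsilon$, $y=(1,2)$ and shows that $DB_{n,R}$ is an anticode in which every pairwise Kendall $\tau$-distance is at most $2R+1$.

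First I would set $d\deff 2(R+1)=2R+2$, so that $d-1=2R+1$, and then $DB_{n,R}$ is an anticode whose maximum Kendall $\tau$-distance is at most $d-1$. Let $\cC\subseteq S_n$ be a code with minimum Kendall $\tau$-distance $d$ attaining $|\cC|=P(n,2(R+1))$. Applying Corollary~\ref{cor:anti_Kendall} to this code $\cC$ and to the anticode $\cA=DB_{n,R}$ gives $|\cC|\cdot|DB_{n,R}|\leq n!$. Substituting $|\cC|=P(n,2(R+1))$ and dividing by $|DB_{n,R}|$ yields exactly the asserted inequality.

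The single point that deserves a moment of care is that Lemma~\ref{lem:oddAnticode}(1) guarantees diameter \emph{at most} $2R+1$, while Corollary~\ref{cor:anti_Kendall} is phrased for an anticode whose maximum distance is equal to $d-1$. This is not a genuine obstacle. Inspecting the proof of Corollary~\ref{cor:anti_Kendall}, one sees that all it uses about $\cA$ is that every pairwise distance lies in $\{1,2,\ldots,d-1\}$, so that the largest subcode of $\cA$ with distances drawn from $\cD=\{d,d+1,\ldots,{n\choose 2}\}$ consists of one codeword. Any anticode of diameter at most $d-1$ satisfies this hypothesis equally well, so the bound $|\cC|\cdot|\cA|\leq n!$ holds verbatim. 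Consequently the corollary follows immediately once the diameter of $DB_{n,R}$ has been identified, and essentially no computation is required.
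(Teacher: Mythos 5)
Your proof is correct and follows exactly the paper's route: the paper obtains this corollary as an immediate consequence of Corollary~\ref{cor:anti_Kendall} together with the fact (from Lemma~\ref{lem:oddAnticode}(1)) that $DB_{n,R}$ is an anticode of diameter $2R+1$. Your extra remark that ``diameter at most $d-1$'' suffices in Corollary~\ref{cor:anti_Kendall} is a valid and worthwhile clarification, but it does not change the argument, which is the same as the paper's.
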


\begin{cor}\label{cor:bound4}
$$
P(n,4)\leq \frac{n!}{2(n-1)}.
$$
\end{cor}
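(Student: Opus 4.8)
The plan is to obtain this as an immediate specialization of the general bound in Corollary~\ref{cor:boundGeneral}. First I would observe that the target minimum distance $4$ can be written in the form $2(R+1)$ by taking $R=1$, so that Corollary~\ref{cor:boundGeneral} applies verbatim and yields
$$
P(n,4)=P(n,2(1+1))\leq \frac{n!}{|DB_{n,1}|}.
$$
This reduces the entire claim to knowing the size of the double ball $DB_{n,1}$.

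Next I would invoke Theorem~\ref{thm:diameter3Anticode}, which establishes that for every $n\geq 4$ the set $DB_{n,1}$ is an anticode of diameter~$3$ whose size is exactly $2(n-1)$. Substituting $|DB_{n,1}|=2(n-1)$ into the displayed inequality gives the asserted bound
$$
P(n,4)\leq \frac{n!}{2(n-1)}.
$$

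There is essentially no obstacle to overcome at this stage: all of the substantive work has already been carried out upstream. In particular, the fact that $DB_{n,R}$ is an anticode of diameter $2R+1$ (part~(1) of Lemma~\ref{lem:oddAnticode}) and the recursive size computation of Corollary~\ref{cor:sizeDS} feed into the code-anticode bound of Corollary~\ref{cor:anti_Kendall}, and the optimality and exact size $2(n-1)$ of $DB_{n,1}$ for $n\geq 4$ were settled in Theorem~\ref{thm:diameter3Anticode} (using the diameter-$2$ characterization of Theorem~\ref{thm:diameter2Anticode} together with the bipartiteness of the Kendall $\tau$-metric). Hence the proof of Corollary~\ref{cor:bound4} is a one-line substitution, and the only thing to verify is that the case $R=1$ indeed corresponds to $d=4$, which it does.
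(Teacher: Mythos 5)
Your proposal is correct and follows exactly the paper's route: Corollary~\ref{cor:bound4} is obtained there precisely by specializing Corollary~\ref{cor:boundGeneral} to $R=1$ and substituting $|DB_{n,1}|=2(n-1)$, the size established in Theorem~\ref{thm:diameter3Anticode} (equivalently, via Corollary~\ref{cor:sizeDS}). Nothing is missing; note only that the bound needs just the size and anticode property of $DB_{n,1}$, not its optimality, so the $n\geq 4$ hypothesis of Theorem~\ref{thm:diameter3Anticode} is not actually a constraint here.
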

Note, that $P(n,4)\geq \frac{(n)!}{2(2n-1)}$~\cite{JSB10} and hence the size of the best known code is within a factor of two from the new upper bound.

Note also,  that since we proved that $DB_{n,1}$ is an optimal anticode of diameter $3$, the upper bound of Corollary~\ref{cor:bound4} is the best bound that can be derived from Corollary~\ref{cor:anti_Kendall}.
An intriguing question is whether $B_{n,R}$ is an optimal anticode of diameter $D=2R$, where $0\leq R <\frac{{n\choose 2}}{2}$ and whether $DB_{n,R}$ is an optimal anticode of diameter $2R+1$, where $0\leq R< \frac{{n\choose 2}-1}{2}$. Table~\ref{tab:nD} present the sizes of the largest known anticodes of diameter $D$ in $S_n$, for $4\leq n\leq 12$ and $2\leq D\leq \max\left\{{n\choose 2},20\right\}$ . For even values of $D$, the bound is the size of the related ball of radius $\frac{D}{2}$ and was computed by computer. A formula to compute some of these values is given in~\cite{Knu98,Mui98} and also in~\cite{JSB10}. Odd values of $D$ were computed using Corollary~\ref{cor:sizeDS}. Related bounds on $P(n,d)$ will be presented in Section~\ref{sec:cyclic}.

\begin{table}[h!]
\centering
\resizebox{15cm}{!}{
\begin{tabular}{|c|c|c|c|c|c|c|c|c|c|c|c|c|c|c|c|c|c|c|c|}
\hline
\backslashbox{$n$}{$D$} & $2$ & $3$& $4$ & 5  & 6   & 7   & 8   & 9   & 10   & 11   & 12  & 13 & 14 & 15 & 16 & 17 & 18 & 19 & 20\\
\hline
                 4      & $4$ & 6 & 9  &  12 & 24  & -  & -  & -  &  -  & -   & -  & - &  -  &  - & -   & - &  -     & -  & -   \\
\hline
                 5      & $5$ & $8$ & 14  & 20 & 29 & 38 & 49 & 60 & 120 & - & - & - &  -  & -  & -   & - &  -    &  -  &  -  \\
\hline
                 6      & 6 & 10 & 20   & 30  & 49  &  68 &  98 &  128 & 169   &  210  &  259 & 308 &  360  & 720  &  -  & - & -     & -   & -   \\
\hline
                 7      & 7 & 12 & 27  & 42  & 76  & 110  & 174  & 238  & 343   &  448  & 602  & 756 &  961  & 1,166  &  1,416 &    1,666    & 1,947   &  2,228 & 2,520 \\
\hline
                 8      & 8 & 14 & 35  & 56  & 111  & 166  & 285  &  404 & 628   &  852  & 1,230  & 1,608 &   2,191 & 2,774  & 3,606   & 4,438 &   5,546    &   6,654 &  8,039  \\
\hline
                 9      & 9 & 16 &  44   & 72  &  155 &  238 &  440 & 642  & 1,068   & 1,494   & 2,298  & 3,102 &  4,489  & 5,876  &  8,095  & 10,314 &   13,640    &  16,966  &  21,671  \\
\hline
                10      & 10 & 18 & 54   & 90  & 209  &  328 &  649 &  970 &  1,717  & 2,464   & 4,015  & 5,566 &  8,504  & 11,442  & 16,599   & 21,756 &  30,239     & 38,722   &   51,909 \\
\hline
                11      & 11 & 20 & 65   & 110  & 274  & 438  &  923 & 1,408  & 2,640   & 3,872   & 6,655  & 9,438 &  15,159  & 20,880  & 31,758   & 42,636 &  61,997     &  81,358  &   113,906   \\
\hline
                12      & 12 & 22 &  77  & 132  & 351  & 570  & 1,274  & 1,978  &  3,914  &  5,850  & 10,569  & 15,288 &  25,728  & 36,168  & 57,486   & 78,804 &  119,483     &  160,162 & 233,389 \\
%\hline
%                 13      &   90  & 156  & 441 &  726 &  1,715 & 2,704 & 5,629   & 8,554   & 16,198 & 23,842 \\
%\hline
%                  14      &   104  & 182  & 542 & 908  & 2,260  & 3,612 & 7,889   & 12,166   &  &  \\
\hline
\end{tabular}}\caption{sizes of the largest known anticodes of diameter $D$ in $S_n$}
 \label{tab:nD}
\end{table}

For completeness, we will present in the next few results some simple optimal anticodes and the related perfect codes and diameter perfect codes in~$S_n$, which might be considered as trivial. If $D={n\choose 2}$ then an optimal anticode of diameter $D$ in $S_n$ is~$S_n$ itself.
Hence, if $\frac{{n\choose2}}{2}\leq R< {n\choose 2}$ then an optimal anticode with diameter $2R\geq {n\choose 2}$ is~$S_n$. Since $|B_{n,R}|<n!$, for $\frac{{n\choose2}}{2}\leq R< {n\choose 2}$, it follows that $B_{n,R}$ is not an optimal anticode with diameter $2R$.
Similarly, if $\frac{{n\choose2}-1}{2}\leq R< {n\choose 2}-1$ then $|DB_{n,R}|<n!$ and hence, $DB_{n,R}$ is not an optimal anticode with diameter $2R+1$.

\begin{theorem}\label{thm:dnchoose2-1}
$\cA\subset S_n$ is an optimal anticode of diameter ${n\choose 2}-1$
if and only if $\cA$ contains either $\sigma$ or $\sigma^r$, for each $\sigma\in S_n$.
\end{theorem}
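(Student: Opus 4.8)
The plan is to build everything on the fact recorded immediately after \eqref{eq:distFormula} (and restated in Lemma~\ref{lem:reverse}): for all $\sigma,\pi\in S_n$ one has $d_K(\sigma,\pi)\leq{n\choose 2}$, with equality \emph{if and only if} $\pi=\sigma^r$. Thus the reverse pairs $\{\sigma,\sigma^r\}$ are exactly the pairs of permutations achieving the maximal possible Kendall $\tau$-distance. First I would note that $\sigma\neq\sigma^r$ for every $\sigma\in S_n$ (as $\sigma(i)=\sigma(n+1-i)$ with $i\neq n+1-i$ would contradict injectivity), so $\sigma\mapsto\sigma^r$ is a fixed-point-free involution on $S_n$, and the unordered pairs $\{\sigma,\sigma^r\}$ partition $S_n$ into exactly $n!/2$ two-element classes.

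Next I would translate the anticode condition into this language. A subset $\cA\subseteq S_n$ has diameter at most ${n\choose 2}-1$ precisely when it contains no pair at distance ${n\choose 2}$, that is, when it contains at most one permutation from each reverse pair $\{\sigma,\sigma^r\}$; in particular $\cA$ can never contain both $\sigma$ and $\sigma^r$, so the condition ``either $\sigma$ or $\sigma^r$ lies in $\cA$'' is to be read as ``exactly one of them does.'' Since there are $n!/2$ pairs, the at-most-one-per-pair observation gives $|\cA|\leq n!/2$ for every such anticode. Conversely, any transversal of the partition—a set choosing exactly one permutation from each pair—meets the requirement: for distinct $x,y$ in the transversal we have $y\neq x^r$, hence $d_K(x,y)<{n\choose 2}$, i.e.\ $d_K(x,y)\leq{n\choose 2}-1$. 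Such a transversal has size $n!/2$, so the optimal anticode of diameter ${n\choose 2}-1$ has size \emph{exactly} $n!/2$.

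Finally I would close both directions by a counting argument against this size bound. For the ``only if'' direction, an optimal anticode $\cA$ satisfies $|\cA|=n!/2$ and, having diameter ${n\choose 2}-1$, meets each of the $n!/2$ pairs in at most one point; a pigeonhole count then forces it to meet each pair in exactly one point, so $\cA$ contains either $\sigma$ or $\sigma^r$ for every $\sigma\in S_n$. For the ``if'' direction, a set containing exactly one of $\sigma,\sigma^r$ for each $\sigma$ is a transversal, hence (by the previous paragraph) an anticode of diameter ${n\choose 2}-1$ of size $n!/2$, which is optimal by the bound just established.

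I do not expect a genuine obstacle here: once the reverse pairing is recognized as a fixed-point-free involution and matched with the equality case of the distance bound, the whole statement collapses to a transversal/pigeonhole count. The only point needing care is to verify $\sigma\neq\sigma^r$, so that the classes are true two-element pairs and the count $n!/2$ is exact; the rest follows mechanically from the size bound $|\cA|\leq n!/2$ and its attainment by any transversal.
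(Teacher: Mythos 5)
Your proposal is correct and takes essentially the same approach as the paper: both arguments hinge on the fact that $d_K(\sigma,\pi)={n\choose 2}$ if and only if $\pi=\sigma^r$, so that anticodes of diameter ${n\choose 2}-1$ are precisely the sets meeting each reverse pair $\{\sigma,\sigma^r\}$ in at most one element, and the optimal ones are the transversals of this pairing. The paper compresses the final counting into ``Thus, the theorem follows,'' while you spell out the fixed-point-free involution, the $n!/2$ size bound, and the pigeonhole step---more detail, but the same proof.
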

\begin{proof}
If $\cA$ is an optimal anticode of diameter ${n\choose2}-1$ then by Lemma~\ref{lem:reverse}, for every $\sigma\in S_n$,
$\cA$ cannot contain
both $\sigma$ and $\sigma^r$. On the other hand, if $\pi\neq \sigma^r$ then $d_K(\sigma,\pi)\leq {n\choose 2}-1$.
Thus, the theorem follows.
\end{proof}

\begin{cor}\label{cor:dnchoose2-1}
An optimal anticode $\cA\subset S_n$ of diameter ${n\choose 2}-1$ has size $\frac{n!}{2}$ and
can be chosen in $2^{\frac{n!}{2}}$ different ways.
\end{cor}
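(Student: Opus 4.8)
The plan is to read off both assertions directly from \Tref{thm:dnchoose2-1}, which characterizes the optimal anticodes of diameter ${n\choose 2}-1$ as exactly those subsets $\cA\subset S_n$ that contain, for each $\sigma\in S_n$, either $\sigma$ or $\sigma^r$. The structural fact I would establish first is that the reversal map $\sigma\mapsto\sigma^r$ is a \emph{fixed-point-free involution} on $S_n$. It is an involution because reversing a permutation twice returns the original. To see it has no fixed points, suppose $\sigma=\sigma^r$; then $\sigma(i)=\sigma(n+1-i)$ for every $i$, and taking $i=1$ forces $\sigma(1)=\sigma(n)$, contradicting the injectivity of $\sigma$ for every $n\geq 2$. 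Consistent with this, \Lref{lem:reverse} gives $d_K(\sigma,\sigma^r)={n\choose 2}$, so $\sigma$ and $\sigma^r$ are genuinely distinct permutations sitting at maximal distance.

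Because reversal is a fixed-point-free involution, it partitions $S_n$ into exactly $\frac{n!}{2}$ unordered pairs $\{\sigma,\sigma^r\}$. I would then invoke \Tref{thm:dnchoose2-1} to argue that an optimal anticode $\cA$ of diameter ${n\choose 2}-1$ selects precisely one representative from each such pair: by the theorem it contains at least one of $\sigma,\sigma^r$, while it cannot contain both, since $d_K(\sigma,\sigma^r)={n\choose 2}>{n\choose 2}-1$ would violate the diameter bound. Hence $|\cA|$ equals the number of pairs, namely $\frac{n!}{2}$, which establishes the size claim.

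For the enumeration, I would note that specifying such an $\cA$ is equivalent to making one independent binary choice for each pair---taking either $\sigma$ or $\sigma^r$ as its representative. Since there are $\frac{n!}{2}$ pairs and two choices per pair, the number of distinct optimal anticodes is $2^{n!/2}$, giving the second claim.

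I do not expect a real obstacle here. The only point that genuinely needs care is verifying that reversal is fixed-point-free, so that the decomposition of $S_n$ is into true two-element sets; otherwise a fixed point $\sigma=\sigma^r$ would both shrink the pair count and break the factor-two choice at that block, yielding a size and a count different from $\frac{n!}{2}$ and $2^{n!/2}$. As shown above this is immediate from injectivity for all $n\geq 2$, so the corollary follows cleanly once \Tref{thm:dnchoose2-1} is in hand.
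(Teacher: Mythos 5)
Your proposal is correct and takes essentially the same route as the paper, which states this corollary without a separate proof as an immediate consequence of \Tref{thm:dnchoose2-1}: the pairing of $S_n$ into $\frac{n!}{2}$ blocks $\{\sigma,\sigma^r\}$, with exactly one representative per block forced by the theorem and the diameter constraint, is precisely the derivation the paper leaves implicit. Your explicit verification that reversal is a fixed-point-free involution (so the blocks are genuine two-element sets) is the one detail the paper omits, and it is handled correctly.
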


\begin{cor}\label{cor:Pnnchoose2}
\begin{itemize}
\item[]
\item[$\bullet$] For each $\sigma\in S_n$, the set $\{\sigma,\sigma^r\}$ is a $D$-diameter perfect code, $D={n\choose2}-1$.

\item[$\bullet$] If $2R+1={n\choose 2}$ then $\{\sigma,\sigma^r\}$ is a perfect $R$-error-correcting code.

\end{itemize}
\end{cor}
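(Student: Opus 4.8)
The two parts rest on a single preliminary observation: the code $\cC=\{\sigma,\sigma^r\}$ has exactly one nonzero pairwise distance, namely $d_K(\sigma,\sigma^r)$, and by the remark following equation~(\ref{eq:distFormula}) this distance equals ${n\choose 2}$. Hence, for every choice of $\sigma$, the code $\cC$ has minimum Kendall $\tau$-distance exactly ${n\choose 2}$ and size $2$. Everything else is bookkeeping against the two relevant definitions.

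For the first bullet, the plan is to set $D={n\choose2}-1$ so that the minimum distance of $\cC$ becomes $d=D+1={n\choose2}$, matching the requirement in the definition of a $D$-diameter perfect code. It then remains only to produce an anticode $\cA$ of diameter $D$ with $|\cC|\cdot|\cA|=n!$. I would invoke Corollary~\ref{cor:dnchoose2-1}, which guarantees that an optimal anticode of diameter ${n\choose2}-1$ exists and has size $\frac{n!}{2}$. Taking $\cA$ to be such an anticode gives $|\cC|\cdot|\cA|=2\cdot\frac{n!}{2}=n!$, so $\cC$ satisfies the definition and is a $D$-diameter perfect code with $D={n\choose2}-1$.

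For the second bullet I would bypass the anticode machinery altogether and verify the defining covering property of a perfect $R$-error-correcting code directly, using Lemma~\ref{lem:reverse}. For an arbitrary $\pi\in S_n$, that lemma yields $d_K(\sigma,\pi)+d_K(\sigma^r,\pi)={n\choose2}=2R+1$. Two nonnegative integers summing to the \emph{odd} number $2R+1$ can be neither both $\leq R$ nor both $\geq R+1$; therefore exactly one of them is $\leq R$. Thus every permutation $\pi$ lies within Kendall $\tau$-distance $R$ of exactly one of $\sigma,\sigma^r$, which is precisely the definition of a perfect $R$-error-correcting code.

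I do not expect a genuine obstacle: the first part is a direct check against the definition of a diameter perfect code together with the anticode size supplied by Corollary~\ref{cor:dnchoose2-1}, and the second part collapses to the elementary parity argument above once Lemma~\ref{lem:reverse} is in hand. The only point demanding care is noticing that the hypothesis $2R+1={n\choose2}$ makes the sum of the two distances odd, which is exactly what forces the ``exactly one'' conclusion; had the diameter been even, a permutation equidistant from $\sigma$ and $\sigma^r$ could exist and the covering would fail to be perfect.
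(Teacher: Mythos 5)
Your proof is correct and is essentially the derivation the paper intends: the paper states Corollary~\ref{cor:Pnnchoose2} without an explicit proof, treating it as immediate from the fact that $d_K(\sigma,\sigma^r)={n\choose 2}$, the anticode of size $\frac{n!}{2}$ supplied by Corollary~\ref{cor:dnchoose2-1}, and the covering property forced by Lemma~\ref{lem:reverse}. Your filled-in details (the product $2\cdot\frac{n!}{2}=n!$ for the first bullet, and the parity argument showing exactly one of the two distances is at most $R$ for the second) are exactly the intended steps.
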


\begin{theorem}\label{thm:PdLarge}
If $ \frac{2}{3}{n\choose 2}<d\leq {n\choose 2}$ then $P\left(n,d\right)=2$.
\end{theorem}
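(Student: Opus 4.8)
The plan is to prove the two inequalities $P(n,d)\ge 2$ and $P(n,d)\le 2$ separately. For the lower bound I would fix any permutation $\sigma\in S_n$ and consider the two-element set $\{\sigma,\sigma^r\}$. By \Lref{lem:reverse} we have $d_K(\sigma,\sigma^r)={n\choose 2}\ge d$, so this is a code whose minimum Kendall $\tau$-distance is at least $d$; in particular the two words are distinct since their distance ${n\choose 2}$ is positive. This gives $P(n,d)\ge 2$ for every $d\le {n\choose 2}$.

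The substance of the argument is the matching upper bound $P(n,d)\le 2$, which I would obtain by showing that no three permutations can be pairwise at Kendall $\tau$-distance exceeding $\frac{2}{3}{n\choose 2}$. Suppose, for contradiction, that $\sigma_1,\sigma_2,\sigma_3$ satisfy $d_K(\sigma_a,\sigma_b)\ge d$ for all $a\ne b$. I would bound the sum $d_K(\sigma_1,\sigma_2)+d_K(\sigma_1,\sigma_3)+d_K(\sigma_2,\sigma_3)$ from above by splitting it according to the contribution of each unordered pair of elements, using the inversion description in \eq{eq:distFormula}. For a fixed pair $\{i,j\}$ and a permutation $\sigma_a$, record the bit $x_a\in\{0,1\}$ indicating whether $i$ precedes $j$ in $\sigma_a$. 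The pair $\{i,j\}$ contributes $1$ to $d_K(\sigma_a,\sigma_b)$ exactly when $x_a\ne x_b$, so its total contribution to the three pairwise distances equals the number of disagreeing pairs among the bits $x_1,x_2,x_3$.

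The combinatorial core is then the observation that for any three bits this number of disagreements equals $0$ when all three agree and $2$ otherwise, and hence never exceeds $2$. Summing over all ${n\choose 2}$ pairs $\{i,j\}$ yields $d_K(\sigma_1,\sigma_2)+d_K(\sigma_1,\sigma_3)+d_K(\sigma_2,\sigma_3)\le 2{n\choose 2}$. On the other hand each of the three distances is at least $d>\frac{2}{3}{n\choose 2}$, so their sum is strictly greater than $2{n\choose 2}$, a contradiction. Therefore no code of size $3$ with minimum distance $d$ exists, and combined with the lower bound we conclude $P(n,d)=2$. I expect the only point needing care to be the per-pair bound itself, namely the elementary check over the eight bit-patterns that three elements cannot all be pairwise reversed on a single pair; this is immediate, and the remainder is a direct double counting over pairs of elements, in the same spirit as the parity argument used for bipartiteness in \Lref{lem:biGraph}.
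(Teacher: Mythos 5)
Your proof is correct, but your upper-bound argument takes a genuinely different route from the paper's. The paper normalizes one codeword to $\varepsilon$ by right invariance, observes via \Lref{lem:reverse} that the other two codewords $\sigma,\pi$ satisfy $d_K(\sigma,\varepsilon^r)\leq{n\choose 2}-d$ and $d_K(\pi,\varepsilon^r)\leq{n\choose 2}-d$, and concludes by the triangle inequality through the antipode $\varepsilon^r$ that $d_K(\sigma,\pi)\leq 2{n\choose 2}-2d<d$, a contradiction. You instead embed $S_n$ isometrically into the Hamming cube $\{0,1\}^{\binom{n}{2}}$ by recording, for each unordered pair $\{i,j\}$, which of the two elements comes first, so that by \eq{eq:distFormula} the Kendall $\tau$-distance becomes Hamming distance; the per-coordinate observation that three bits admit at most two pairwise disagreements then gives $d_K(\sigma_1,\sigma_2)+d_K(\sigma_1,\sigma_3)+d_K(\sigma_2,\sigma_3)\leq 2{n\choose 2}$, contradicting the assumption that each distance exceeds $\frac{2}{3}{n\choose 2}$. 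Both arguments reduce to the same inequality $3d\leq 2{n\choose 2}$, but yours needs neither right invariance nor \Lref{lem:reverse}, and it is a Plotkin-type argument: the same double counting over coordinates shows that $k$ codewords have pairwise-distance sum at most $\left\lfloor k^2/4\right\rfloor{n\choose 2}$, so it yields size bounds for any minimum distance $d>\frac{1}{2}{n\choose 2}$, not just for $d>\frac{2}{3}{n\choose 2}$. The paper's route is shorter given the lemmas already in place and exploits the antipodal structure of the metric, namely that every permutation has a unique diametrically opposite point, its reverse. Your lower bound $P(n,d)\geq 2$ via $\{\sigma,\sigma^r\}$ is identical to the paper's.
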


\begin{proof} Any code of the form $\{\sigma,\sigma^r\}$ has minimum Kendall $\tau$-distance at least $d$, and therefore $P(n,d)\geq 2$.

Assume to the contrary that $P(n,d)\geq 3$, i.e. there exists a code $\cC\subset S_n$ with minimum Kendall $\tau$-distance~$d$ and of size 3. Since the Kendall $\tau$-metric is right invariant, we can assume w.l.o.g. that $\cC=\{\varepsilon,\sigma,\pi\}$. We have that $d\leq w_K(\sigma)$ and $d\leq w_K(\pi)$ and $d\leq d_K(\sigma,\pi)$. By Lemma~\ref{lem:reverse} we have that $d_K(\sigma,\varepsilon^r)\leq {n\choose 2}-d $ and $d_K(\pi,\varepsilon ^r)\leq {n\choose 2}-d$. By the triangle inequality it follows that $d_K(\sigma,\pi)\leq 2{n\choose 2}-2d<2{n\choose 2}-2\frac{2}{3}{n\choose 2}<d$.

\end{proof}
\begin{cor}
If $2R={n\choose 2}-1$ then $B_{n,R}$ is an optimal anticode of diameter ${n\choose 2}-1$.
\end{cor}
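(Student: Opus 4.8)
The plan is to deduce this as an immediate consequence of the characterization of optimal anticodes of diameter ${n\choose 2}-1$ already established in Theorem~\ref{thm:dnchoose2-1} and Corollary~\ref{cor:dnchoose2-1}. First I would record the standing observation that the hypothesis $2R={n\choose 2}-1$ forces ${n\choose 2}$ to be odd and $R=\frac{{n\choose 2}-1}{2}$ to be a genuine integer. Next, I would verify that $B_{n,R}$ is an anticode of the stated diameter: for any two permutations $\sigma,\pi\in B_{n,R}$ the triangle inequality gives $d_K(\sigma,\pi)\leq d_K(\sigma,\varepsilon)+d_K(\varepsilon,\pi)\leq 2R={n\choose 2}-1$, so all pairwise distances in $B_{n,R}$ are at most ${n\choose 2}-1$.

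The core of the argument is to show that $B_{n,R}$ contains exactly one of $\sigma$ and $\sigma^r$ for each $\sigma\in S_n$, which by Theorem~\ref{thm:dnchoose2-1} is precisely the condition for optimality. Here I would apply Lemma~\ref{lem:reverse} with $\pi=\varepsilon$, which yields $w_K(\sigma)+w_K(\sigma^r)={n\choose 2}$ for every $\sigma$. Writing $M={n\choose 2}$ and using that $M$ is odd, the two weights $w_K(\sigma)$ and $w_K(\sigma^r)$ cannot be equal and sum to $M$; hence exactly one of them is at most $\frac{M-1}{2}=R$ while the other is at least $\frac{M+1}{2}$. Consequently $B_{n,R}=\{\sigma:w_K(\sigma)\leq R\}$ contains exactly one member of each pair $\{\sigma,\sigma^r\}$, and Theorem~\ref{thm:dnchoose2-1} gives that $B_{n,R}$ is an optimal anticode of diameter ${n\choose 2}-1$.

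As a consistency check one may note that the same reverse involution $\sigma\mapsto\sigma^r$ maps $B_{n,R}$ bijectively onto $\{\sigma:w_K(\sigma)\geq\frac{M+1}{2}\}$, so these two disjoint sets partition $S_n$ and each has size $\frac{n!}{2}$; this matches the optimal size $\frac{n!}{2}$ recorded in Corollary~\ref{cor:dnchoose2-1} and gives an alternative route to the conclusion. I do not anticipate any genuine obstacle in this corollary, as it is essentially a specialization of the preceding results. The only point demanding care is the parity observation: it is exactly the oddness of ${n\choose 2}$, forced by the integrality of $R$, that prevents $w_K(\sigma)=w_K(\sigma^r)$ and thereby makes the reverse map pair each low-weight permutation with a distinct high-weight one, so that $B_{n,R}$ picks up precisely one representative from every reversal pair.
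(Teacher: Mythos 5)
Your proof is correct and takes essentially the same approach as the paper: the paper's entire proof of this corollary is the one-line citation ``Follows from Lemma~\ref{lem:reverse}, Theorem~\ref{thm:dnchoose2-1}, and Corollary~\ref{cor:dnchoose2-1},'' and your argument---applying Lemma~\ref{lem:reverse} with $\pi=\varepsilon$, using the oddness of ${n\choose 2}$ forced by $2R={n\choose 2}-1$ to conclude that $B_{n,R}$ contains exactly one member of each pair $\{\sigma,\sigma^r\}$, and then invoking the characterization in Theorem~\ref{thm:dnchoose2-1}---is precisely the verification that this citation leaves to the reader. Nothing is missing or superfluous; the consistency check against the size $\frac{n!}{2}$ from Corollary~\ref{cor:dnchoose2-1} matches the paper's intent as well.
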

\begin{proof}
Follows from Lemma~\ref{lem:reverse}, Theorem~\ref{thm:dnchoose2-1}, and Corollary~\ref{cor:dnchoose2-1}.
\end{proof}

\begin{lemma}
If $2R+1={n\choose 2}-1$ then $DB_{n,R}$ is an optimal anticode of diameter ${n\choose 2}-1$.
\end{lemma}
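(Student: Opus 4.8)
The plan is to reduce everything to the characterization of optimal anticodes of diameter ${n\choose 2}-1$ in \Tref{thm:dnchoose2-1}. Writing the hypothesis as $2R+2={n\choose 2}$, I would show that $DB_{n,R}$ is an anticode of diameter at most ${n\choose 2}-1$ that contains \emph{exactly} one member of every reversal pair $\{\sigma,\sigma^r\}$; since there are $\frac{n!}{2}$ such pairs, this gives $|DB_{n,R}|=\frac{n!}{2}$, which by \Cref{cor:dnchoose2-1} is precisely the size of an optimal anticode of that diameter, so $DB_{n,R}$ is optimal. The diameter bound is immediate from \Lref{lem:oddAnticode}\,(1) because $2R+1={n\choose 2}-1$, and this bound also supplies the ``at most one'' half for free: since $d_K(\sigma,\sigma^r)={n\choose 2}$ by \Lref{lem:reverse}, no anticode of diameter ${n\choose 2}-1$ can hold both $\sigma$ and $\sigma^r$. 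Hence the entire content is to prove that $DB_{n,R}$ contains \emph{at least} one of $\sigma,\sigma^r$ for every $\sigma$.

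Fix $\sigma\in S_n$. First I would apply \Lref{lem:reverse} with $\pi=\varepsilon$ to get $w_K(\sigma)+w_K(\sigma^r)={n\choose 2}=2R+2$. If either weight is at most $R$, then the corresponding permutation lies in $B(\varepsilon,R)\subseteq DB_{n,R}$ and we are finished. The only surviving case is $w_K(\sigma)=w_K(\sigma^r)=R+1$, in which neither $\sigma$ nor $\sigma^r$ belongs to $B(\varepsilon,R)$; here I would exploit the second center $(1,2)$.

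In this remaining case I would first locate the possible distances to $(1,2)$. Since $d_K(\varepsilon,(1,2))=1$, the triangle inequality gives $d_K(\sigma,(1,2))\in\{R,R+1,R+2\}$, and the bipartiteness of the metric (\Lref{lem:biGraph}) forces $d_K(\sigma,(1,2))\equiv w_K(\sigma)+1\equiv R\ (\hbox{mod }2)$, so in fact $d_K(\sigma,(1,2))\in\{R,R+2\}$, and symmetrically $d_K(\sigma^r,(1,2))\in\{R,R+2\}$. Now a second application of \Lref{lem:reverse}, this time with $\pi=(1,2)$, yields $d_K(\sigma,(1,2))+d_K(\sigma^r,(1,2))={n\choose 2}=2R+2$. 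The only way two numbers drawn from $\{R,R+2\}$ can sum to $2R+2$ is for one to equal $R$ and the other to equal $R+2$; hence one of $\sigma,\sigma^r$ lies in $B((1,2),R)\subseteq DB_{n,R}$, settling the last case.

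Putting the pieces together, $DB_{n,R}$ meets every reversal pair in at least one element, and being an anticode of diameter ${n\choose 2}-1$ it meets each in at most one, so it contains exactly one representative of each of the $\frac{n!}{2}$ pairs. Thus $|DB_{n,R}|=\frac{n!}{2}$, matching the optimal size of \Cref{cor:dnchoose2-1}, which proves optimality. The only genuinely delicate step is the third paragraph, where the combination of the triangle inequality with the parity constraint squeezes each distance to $(1,2)$ into the two-element set $\{R,R+2\}$; once that is in place, the double use of \Lref{lem:reverse} closes the argument mechanically.
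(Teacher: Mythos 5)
Your proof is correct and follows essentially the same route as the paper's: reduce to the characterization of optimal anticodes of diameter ${n\choose 2}-1$ (\Tref{thm:dnchoose2-1}), split on $w_K(\sigma)\leq R$, $w_K(\sigma)>R+1$, or $w_K(\sigma)=R+1$ via \Lref{lem:reverse}, and in the last case use bipartiteness (\Lref{lem:biGraph}) together with $d_K(\sigma,(1,2))+d_K(\sigma^r,(1,2))={n\choose 2}$ to force one of $\sigma,\sigma^r$ into $B((1,2),R)$. Your write-up is in fact slightly more explicit than the paper's (spelling out the triangle-inequality step and the exact-size count via \Cref{cor:dnchoose2-1}), but the underlying argument is identical.
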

\begin{proof}
Recall that $\varepsilon$ and $(1,2)$ are the centers of $DB_{n,R}$.
By Theorem~\ref{thm:dnchoose2-1} it is sufficient to show that for every $\sigma\in S_n$, either $\sigma\in DB_{n,R}$ or $\sigma^r\in DB_{n,R}$.
If $w_{K}(\sigma)\leq R$ then by Lemma~\ref{lem:reverse} $w_K(\sigma^r)={n\choose 2}-w_K(\sigma)>R+1$ and therefore, $\sigma\in DB_{n,R}$ and $\sigma^r\not\in DB_{n,R}$. Similarly, if $w_{K}(\sigma)>R+1$ then $\sigma\not\in DB_{n,R}$ and $\sigma^r\in DB_{n,R}$. If $w_{K}(\sigma)=R+1$ then by Lemma~\ref{lem:reverse} $w_K(\sigma^r)=R+1$. By Lemma~\ref{lem:biGraph} and since $w_K((1,2))=1$ it follows that either $d_K(\sigma,(1,2))=R$ or $d_K(\sigma,(1,2))=R+2$. Similarly, either $d_K(\sigma^r,(1,2))= R$ or $d_K(\sigma^r,(1,2))= R+2$. By Lemma~\ref{lem:reverse} we conclude that either $d_K(\sigma,(1,2))= R$ or $d_K(\sigma^r,(1,2))= R$.
\end{proof}

The next theorem can be easily verified.

\begin{theorem}
Any set $\{\sigma,\pi\}$ such that $d_K(\sigma,\pi)=1$ is an optimal anticode of diameter one.
The set of all permutations of even Kendall $\tau$-weight, known as the alternating group, $A_n$, is a $1$-diameter perfect code.
Similarly, the set of all permutations of odd Kendall $\tau$-weight, $S_n\setminus A_n$, is an $1$-diameter perfect code.
These codes are the only $1$-diameter perfect codes in $S_n$.
\end{theorem}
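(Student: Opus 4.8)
The plan is to treat the four assertions in turn, noting that the first three are short consequences of facts already established and that the genuine content lies in the final uniqueness statement. Throughout I would work with the graph $G_n$, recalling that it is connected (the metric is graphic), bipartite by Lemma~\ref{lem:biGraph}, and $(n-1)$-regular, since the ball of radius one has size~$n$ and hence each permutation has exactly $n-1$ neighbours.

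For the first assertion I would observe that a subset $\cA$ has diameter one exactly when any two of its distinct members are at Kendall $\tau$-distance one, i.e.\ $\cA$ is a clique in $G_n$. As $G_n$ is bipartite it contains no triangle, so any such clique has at most two vertices; thus a set $\{\sigma,\pi\}$ with $d_K(\sigma,\pi)=1$ already attains the maximal size~$2$ and is an optimal anticode of diameter one. For the second and third assertions I would use bipartiteness in the form $d_K(\sigma,\pi)\equiv w_K(\sigma)+w_K(\pi)\pmod 2$ (take the three points $\sigma,\varepsilon,\pi$): for $\sigma,\pi\in A_n$ the right side is even, so distinct even-weight permutations lie at even distance, whence $A_n$ has minimum distance exactly~$2$ (the value $2$ being attained by $\varepsilon$ and $[3,1,2,4,\dots,n]$), and likewise for $S_n\setminus A_n$. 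Since $|A_n|=|S_n\setminus A_n|=n!/2$ and the optimal diameter-one anticode has size~$2$, the equality $\tfrac{n!}{2}\cdot 2=n!$ in Corollary~\ref{cor:anti_Kendall} shows both sets are $1$-diameter perfect codes with $D=1$ and $d=D+1=2$.

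The substantive step is to show these are the \emph{only} such codes. Suppose $\cC$ is a $1$-diameter perfect code. Then its minimum distance is $2$ and, since the optimal diameter-one anticode has size~$2$, the defining equality forces $|\cC|=n!/2$; minimum distance~$2$ means no two codewords are adjacent, so $\cC$ is an independent set in $G_n$. Here I would invoke regularity: $G_n$ has $\tfrac12 n!(n-1)$ edges, every edge incident to $\cC$ has exactly one endpoint in $\cC$, and the number of such incidences is $|\cC|(n-1)=\tfrac12 n!(n-1)$, which already equals the total number of edges. Consequently every edge of $G_n$ joins $\cC$ to its complement, so $S_n\setminus\cC$ is also independent and $(\cC,\,S_n\setminus\cC)$ is a bipartition of $G_n$. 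As $G_n$ is connected it has a unique bipartition, namely $(A_n,\,S_n\setminus A_n)$; hence $\cC$ equals $A_n$ or $S_n\setminus A_n$.

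The main obstacle is precisely this uniqueness step, i.e.\ ruling out ``mixed'' independent sets of size $n!/2$. Without exploiting regularity one would have to analyse the maximum independent sets of $G_n$ directly, which is delicate; the clean way around this is the edge-counting identity above, which upgrades independence of $\cC$ to independence of its complement and thereby pins $\cC$ down to a colour class via connectivity.
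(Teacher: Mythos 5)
Your proposal is correct, and in fact it supplies more than the paper does: the paper states this theorem with no argument at all (``The next theorem can be easily verified''), so there is no competing proof to compare against. Your verification of the first three assertions is sound: bipartiteness of $G_n$ rules out triangles, so cliques (= diameter-one anticodes) have size at most $2$; the parity relation $d_K(\sigma,\pi)\equiv w_K(\sigma)+w_K(\pi)\pmod 2$ gives $A_n$ and $S_n\setminus A_n$ minimum distance exactly $2$; and $\tfrac{n!}{2}\cdot 2=n!$ meets the bound of Corollary~\ref{cor:anti_Kendall}, so both are $1$-diameter perfect. The uniqueness step is the only place where real work is needed, and your edge-counting argument is the right one: a $1$-diameter perfect code $\cC$ must be an independent set of size $n!/2$ in the $(n-1)$-regular graph $G_n$ (the case $|\cA|=1$ in the defining equality is excluded since $\cC=S_n$ has minimum distance $1$), the incidence count $|\cC|(n-1)=\tfrac12 n!(n-1)$ exhausts all edges, so the complement of $\cC$ is also independent, and connectivity of $G_n$ (any permutation is sorted by adjacent transpositions) makes the bipartition unique, forcing $\cC\in\{A_n,\,S_n\setminus A_n\}$. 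This is a clean and complete argument; the only cosmetic caveat is the degenerate case $n=2$, where $A_2$ is a single permutation and ``minimum distance $2$'' is vacuous, which the paper implicitly ignores as well.
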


%%%%%%%%%%%%%%%%%%

%%%%%%%%%%%%%%%%%%%%%%%%%%%%%%%%%%%%%%%%%5
%%5%%%%%%%%%%%%%%%%%%%%%%%%%%%%%%%%%%%%%
%%%
%%%  THE CYCLIC KENDALL'S TAU DISTANCE
%%%%%%%%%%%%%%%%%%%%%%%%%%%%%
%%%%%%%%%%%%%%%%%%%%%%%%%%%%%55

\section{Constructions of Large Codes and a Table of the Bounds}
\label{sec:cyclic}

In this section we present two large codes with minimum Kendall $\tau$-distance 3 in $S_5$ and $S_7$.
These two codes have large automorphism groups and can be represented only by one or two codewords, respectively.
We hope that the method in which we constructed these codes can be applied for other values of $n$ and minimum Kendall $\tau$-distance.
In addition, we present a table of the lower and upper bounds on $P(n,d)$ for small values of $n$. Throughout this section the positions and elements of permutations of length $n$ are taken from the set $\{0,1,2,\ldots,n-1\}$ (instead of the set $[n]$).

By Theorem~\ref{thm:nonexist}, there is no perfect single-error-correcting code in $S_5$,
using the Kendall $\tau$-distance.
However, if we add to the set of adjacent transpositions, which defines the Kendall $\tau$-metric, the transposition $(0,n-1)$, we obtain a new metric in which the code $\cC_5$, consists of the following 20 codewords, is a perfect single-error-correcting code in $S_5$.

\begin{center}
$[0,1,2,3,4],~~ [0,2,4,1,3],~~ [0,3,1,4,2],~~ [0,4,3,2,1]$\\
$[1,2,3,4,0],~~ [2,4,1,3,0],~~ [3,1,4,2,0],~~ [4,3,2,1,0]$\\
$[2,3,4,0,1],~~ [4,1,3,0,2],~~ [1,4,2,0,3],~~ [3,2,1,0,4]$\\
$[3,4,0,1,2],~~ [1,3,0,2,4],~~ [4,2,0,3,1],~~ [2,1,0,4,3]$\\
$[4,0,1,2,3],~~ [3,0,2,4,1],~~ [2,0,3,1,4],~~ [1,0,4,3,2]$\\
\end{center}

Note, that if $[\sigma(0),\sigma(1),\ldots,\sigma(4)]$ is a codeword then $[\sigma(1),\ldots,\sigma(4),\sigma(0)]$ and $[2\sigma(0),2\sigma(1),\ldots,2\sigma(4)]$ are also codewords, where the computations are performed modulo 5. Hence, this code can be represented by only one codeword $[0,1,2,3,4]$ and it has an automorphism group of size 20.
Note, also that the minimum Kendall $\tau$-distance of this code is at least 3 (since the Kendall $\tau$-distance can only be increased by removing the transposition (0,n-1)) and hence ,
\begin{theorem}\label{thm:P53}
$$
P(5,3)\geq 20.
$$
\end{theorem}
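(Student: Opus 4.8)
The plan is to take the explicitly listed code $\cC_5$ and show that its minimum Kendall $\tau$-distance is at least $3$; since $P(5,3)$ is by definition the size of the largest code in $S_5$ with minimum Kendall $\tau$-distance $3$, this immediately yields $P(5,3)\geq 20$. Counting the permutations in the displayed array confirms $|\cC_5|=20$, so the entire burden is the lower bound on the minimum distance. To obtain it I would introduce the auxiliary metric $d'$ on $S_5$ generated by the four adjacent transpositions $(0,1),(1,2),(2,3),(3,4)$ together with the wrap-around transposition $(0,4)$, reading the positions cyclically.

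The first and central observation is a monotonicity principle for graphic metrics: every generator of the Kendall $\tau$-metric is also a generator of $d'$, so any sequence of adjacent transpositions carrying $\sigma$ to $\pi$ is also an admissible $d'$-sequence. Hence $d'(\sigma,\pi)\leq d_K(\sigma,\pi)$ for all $\sigma,\pi\in S_5$, and it suffices to prove that the minimum $d'$-distance of $\cC_5$ is at least $3$.

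To establish the latter I would verify that $\cC_5$ is a perfect single-error-correcting code with respect to $d'$. In the Cayley graph of $d'$ every vertex has degree $5$, so a $d'$-ball of radius one has size $6$, and $|\cC_5|\cdot 6=20\cdot 6=120=5!$ matches the sphere-packing count exactly. The remaining task is to check that these radius-one balls are pairwise disjoint, equivalently that no two codewords lie at $d'$-distance $1$ or $2$. Here I would invoke the order-$20$ automorphism group generated by the cyclic shift $\sigma\mapsto[\sigma(1),\dots,\sigma(4),\sigma(0)]$ and the value-scaling $\sigma\mapsto[2\sigma(0),\dots,2\sigma(4)]\pmod 5$: the value-scaling is a right multiplication and hence an isometry of every right-invariant metric, while the cyclic shift is a left multiplication whose conjugation action fixes the \emph{cyclic} set of generators of $d'$, so it too is a $d'$-isometry. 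These maps act transitively on $\cC_5$, which is the orbit of the identity, so it is enough to verify that $\varepsilon=[0,1,2,3,4]$ is at $d'$-distance at least $3$ from each of the other $19$ codewords, a finite check.

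Putting the pieces together finishes the proof: the minimum $d'$-distance of $\cC_5$ is at least $3$, whence by the monotonicity principle its minimum Kendall $\tau$-distance is at least $3$, giving $P(5,3)\geq 20$. The main obstacle is precisely the disjointness verification in the third step; the transitive symmetry group reduces it to $19$ distance computations from a single codeword, but it is essential to route the argument through $d'$ rather than $d_K$, because the cyclic shift, which is indispensable for transitivity, preserves $d'$ but not the Kendall $\tau$-metric. One can alternatively bypass explicit distance computations by exploiting the affine structure of the codewords, which are exactly the permutations $i\mapsto ai+b\pmod 5$ with $a\in\{1,2,3,4\}$ and $b\in\Z_5$; everything else in the argument is routine.
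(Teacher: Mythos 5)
Your proposal is correct and follows essentially the same route as the paper: the paper also adjoins the wrap-around transposition $(0,4)$ to get a cyclic metric in which $\cC_5$ (the affine permutations mod $5$) is a perfect single-error-correcting code, and then uses the observation that deleting a generator can only increase distances to conclude the minimum Kendall $\tau$-distance is at least $3$. Your write-up merely makes explicit the sphere-packing count, the isometry/transitivity argument, and the finite verification that the paper leaves as assertions.
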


In general, we suggest to search for codes in $S_n$, for small $n$, $n$ prime, and small minimum Kendall $\tau$-distance as follows. We require that if $\sigma=[\sigma(0),\sigma(1),\ldots,\sigma(n-1)]$ is a codeword in the code $\cC$ then $[\sigma(1),\ldots,\sigma(n-1),\sigma(0)]$, $[\sigma(0)-1, \sigma(1)-1, \ldots, \sigma(n-1)-1]$, and $[\alpha\sigma(0),\alpha\sigma(1),\ldots, \alpha\sigma(n-1)]$ are also codewords, where the computations are done modulo $n$ and $\alpha$ is a primitive root modulo $n$. Note, that $[\sigma(0)-1, \sigma(1)-1, \ldots, \sigma(n-1)-1]=\sigma\circ [1,2,\ldots,n-1,0]$. A computer search for such a code is easier since the code has a large automorphism group.
We leave as a nice exercise to the reader to verify that a codeword in such a code represents either $n(n-1)$ codewords (if and only if $[0,1,\ldots,n-1]$ is one of the represented codewords, as in $\cC_5$) or $n^2(n-1)$ codewords.

\begin{theorem}\label{thm:P57}
$$
P(7,3)\geq 588.
$$
\end{theorem}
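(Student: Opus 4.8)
The goal is to establish the lower bound $P(7,3) \geq 588$ by exhibiting an explicit single-error-correcting code in $S_7$ with at least $588$ codewords. My plan is to construct such a code using the automorphism-group technique described immediately above the statement, thereby reducing the problem to checking just one or two representative codewords.

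First I would fix a candidate codeword $\sigma \in S_7$ (with positions and elements taken from $\{0,1,\ldots,6\}$) and form its orbit under the three prescribed operations: the cyclic shift $\sigma \mapsto [\sigma(1),\ldots,\sigma(6),\sigma(0)]$, the right multiplication $\sigma \mapsto \sigma \circ [1,2,\ldots,6,0]$ (i.e.\ subtracting $1$ from every entry modulo $7$), and the multiplication $\sigma \mapsto [\alpha\sigma(0),\ldots,\alpha\sigma(6)]$ by a primitive root $\alpha$ modulo $7$ (for instance $\alpha = 3$). By the exercise stated in the excerpt, a single codeword whose orbit does not contain $[0,1,\ldots,6]$ represents exactly $n^2(n-1) = 49 \cdot 6 = 294$ distinct permutations. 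Taking \emph{two} such representative codewords lying in distinct orbits would yield $2 \cdot 294 = 588$ codewords, matching the claimed bound exactly; this tells me the intended construction uses two orbit representatives.

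The substantive step is verifying that the resulting set genuinely has minimum Kendall $\tau$-distance at least $3$. Here the large automorphism group does the heavy lifting. Because the cyclic shift and the subtraction operation are both realized as right multiplications in $S_7$, and the Kendall $\tau$-metric is right invariant (as recalled in the excerpt), distances between codewords are preserved under these symmetries. The multiplication by $\alpha$ is not an isometry of the Kendall $\tau$-metric, but it is a fixed bijection of the code to itself, so it suffices to verify the distance condition on a set of representatives of the orbits of the distance-preserving subgroup. Concretely, I would check that each of the two representative codewords has Kendall $\tau$-weight at least $3$, that the two representatives are at distance at least $3$ from each other, and that each representative is at distance at least $3$ from every image of the other representative (and of itself) under the non-isometric map $\alpha$ and its powers. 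This is a finite computation, feasible by hand or, as the authors indicate, by a short computer search over the drastically reduced search space.

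The main obstacle I anticipate is not the conceptual framework, which the excerpt has already laid out, but rather \emph{finding} two valid representative codewords whose combined orbits form a legitimate distance-$3$ code—this is precisely what the authors report having located by computer search. A secondary subtlety is confirming that the two orbits are genuinely disjoint (so that the count is $588$ rather than fewer) and that neither orbit collapses to the smaller size $n(n-1) = 42$, which would happen if and only if $[0,1,\ldots,6]$ appeared in the orbit. Once suitable representatives are in hand, the right-invariance argument reduces the verification to a manageable finite check, and the bound $P(7,3) \geq 588$ follows.
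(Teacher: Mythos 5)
Your counting framework matches the paper's: the proof of Theorem~\ref{thm:P57} consists of exhibiting two representatives, $\mu=[0,1,2,4,3,6,5]$ and $\nu=[0,1,2,3,6,4,5]$, whose orbits under the three prescribed operations each have size $n^2(n-1)=294$, together giving $588$ permutations, and then verifying that this set has minimum Kendall $\tau$-distance $3$. But your proposal has two genuine gaps. The first is decisive: you never produce the representatives. For an existence statement proved by explicit construction, the representatives \emph{are} the proof; conceding that finding them ``is precisely what the authors report having located by computer search'' leaves the entire mathematical content missing. Nothing in your argument shows that two such orbits with minimum distance $3$ exist --- for most choices of representatives the construction fails, which is exactly why a search is required.

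Second, your symmetry reduction rests on a backwards identification of the isometries. With the paper's convention $\pi\circ\sigma(i)=\sigma(\pi(i))$, the value relabelings --- translation modulo $7$ and multiplication by the primitive root $\alpha$ --- are \emph{right} multiplications (the paper notes $[\sigma(0)-1,\ldots,\sigma(n-1)-1]=\sigma\circ[1,2,\ldots,n-1,0]$), hence isometries by right invariance. The cyclic shift of positions, $\sigma\mapsto[\sigma(1),\ldots,\sigma(6),\sigma(0)]$, is a \emph{left} multiplication by $[1,2,\ldots,n-1,0]$, and the Kendall $\tau$-metric is not left invariant, so it is not an isometry: in $S_3$, $d_K([0,1,2],[1,0,2])=1$ while the shifted pair $[1,2,0]$, $[0,2,1]$ is at distance $3$. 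Consequently your proposed reduced verification --- weights of the representatives, their mutual distance, and distances to $\alpha$-images --- never tests a codeword against cyclic shifts of codewords, which are precisely the pairs \emph{not} related by an isometry; as described, the check could pass for a set whose minimum distance is smaller than $3$. (The weight check is also beside the point, since the identity is not a codeword and the code is not a subgroup.) A correct reduction must range over the $7\times 7$ choices of shifts applied to the two representatives, quotienting only by the affine value maps --- or one simply checks all ${588 \choose 2}$ pairs by computer, which is what ``verify'' in the paper amounts to.
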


\begin{proof}
Verify that the two representatives $\mu=[0,1,3,2,5,6,4]$ and $\nu=[0,1,2,3,6,4,5]$ yield the require code of size 588.
\end{proof}

The previous known lower bounds on $P(5,3)$ and $P(7,3)$ were $18$ and $526$, respectively~\cite{JSB10}. We summarise with
the best known bounds on $P(n,d)$, for $5\leq n\leq 7$ and $3\leq d\leq 9$, which are presented in Table~\ref{tab:bounds}.

\begin{table}[h!]
\centering
\resizebox{15cm}{!}{
\begin{tabular}{|c|c|c|c|c|c|c|c|}
\hline
{\backslashbox{$n$}{$d$}} & 3           & 4                  & 5           & 6                   & 7          & 8          & 9          \\
\hline
                 5      & $^f20-23^b$   &  $^h10-15^c$     &$^d6 -8^a$     &$^j4-6^c$       &$^i2^i$      &$^i2^i$     &$^i2^i$            \\
\hline
                 6      & $^d90-119^b$    &  $^h45-72^c$       & $^d23-36^a$     & $^h12 -24^c$     & $^d10 -14^a$    &$^h5-10^c$    &$^d4 -7^a$        \\
\hline
                 7      & $^e588-719^b$    &  $^h294-420^c$     &  $^d110-186^a$   & $ ^h55-120^c $   & $^d34-66^a$    &$^h17-45^c$    & $^d14-28^a$          \\
\hline
\end{tabular}}
\begin{itemize}
\item[]
\item[$\bullet$] a - The sphere packing bound.
\item[$\bullet$] b - The sphere packing bound + Theorem~\ref{thm:nonexist}.
\item[$\bullet$] c - Corollary~\ref{cor:boundGeneral}.
\item[$\bullet$] d - Lower bounds from ~\cite{JSB10}.
\item[$\bullet$] f - Theorem~\ref{thm:P53}.
\item[$\bullet$] e - Theorem~\ref{thm:P57}.
\item[$\bullet$] h - $P(n,2\delta)\geq \frac{1}{2}P(n,2\delta-1)$~\cite{JSB10}.
\item[$\bullet$] i - Theorem~\ref{thm:PdLarge}.
\item[$\bullet$] j - $\cC=\{[1,2,3,4,5],[1,5,2,3,4],[2,3,4,1,5],[1,4,3,2,5]\}$.

\end{itemize}
\caption{Best known lower and upper bound on $P(n,d)$.}
\label{tab:bounds}
\end{table}

%%%%%%%%%%%%%%%%%%%%%%%%%%%%%%%%%%
%%%
%%%       Group Codes
%%%
%%%%%%%%%%%%%%%%%%%%%%%%%%%%%%%%%%%%

%%%%%%%%%%%%%%%%%%%%%%%%%%%%%%%%%%
%%%
%%%       CONCLUSION
%%%
%%%%%%%%%%%%%%%%%%%%%%%%%%%%%%%%%%%%

\section{Conclusions and Open Problems}
\label{sec:conclusion}
We have considered several questions related to bounds on the size of codes in
the Kendall $\tau$-metric. We gave a novel technique to exclude
the existence of perfect single-error-correcting codes using the Kendall $\tau$-metric.
We applied this technique to prove that there are no perfect
single-error-correcting codes in $S_n$, where $n >4$ is a prime or
$4 \leq n \leq 10$, using the Kendall $\tau$-metric.
We examine the existence question
of diameter perfect codes in $S_n$ and the sizes of optimal anticodes with the Kendall $\tau$-distance.
We obtained a new upper bound
on the size of a code in $S_n$ with even Kendall $\tau$-distance.
Finally, we constructed two large codes with large automorphism groups in $S_5$ and $S_7$.

Our discussion raises many open
problems from which we choose a few as follows.

\begin{enumerate}
\item Prove the nonexistence of perfect codes in $S_n$, using the
Kendall $\tau$-metric, for more values of $n$ and/or other distances.

\item Do there exist more $D$-diameter perfect codes in $S_n$ with the Kendall $\tau$-metric, for $2\leq D<{n\choose 2}-1$?
We conjecture that the answer is no.

\item Is a ball with radius $R$ in $S_n$ always optimal
as an anticode with diameter $2R$ in $S_n$, for $2\leq R <\frac{{n\choose 2}}{2}$?

\item Is the double ball with radius $R$ in $S_n$ always optimal as an anticode with
diameter $2R+1$ in $S_n$, for $2\leq R<\frac{{n\choose 2}-1}{2}$?

\item What is the size of an optimal anticode in $S_n$ with
diameter $D$?

\item Improve the lower bounds on the sizes of codes in $S_n$
with even minimum Kendall $\tau$-distance.

\item Can the codes in $S_5$ and $S_7$ from Section~\ref{sec:cyclic} be generalized for higher values of $n$ and to larger distances? Are these codes of optimal size?
\end{enumerate}

\section*{Acknowledgment}
Sarit Buzaglo would like to thank Amir
Yehudayoff for many useful discussions. The authors would
like to thank the anonymous reviewer of the 2014 International Symposium on Information Theory
for valuable comments. They thank Simon Litsyn for bringing valuable references to their attention. The authors also thank three anonymous reviewers whose detailed reviews and comments helped to improve the presentation of this paper.
Finally, the authors want to thank Professor Hal Sudborough who found an error in the permutations of Theorem~\ref{thm:P57}
in an earlier version.

\section*{Appendix A}%

In Theorem~\ref{thm:nonexist} we proved that
a perfect single-error-correcting code in $S_n$ with the Kendall
$\tau$-metric does not exist if $n>4$ is a prime or if $n=4$.
The proof of Theorem~\ref{thm:nonexist} is based
on a certain linear equations system,
where the existence of a perfect single-error-correcting code in $S_n$
implies the existence of a solution to the linear equations system over the integers,
and thus, by showing the nonexistence of such solution we derive the nonexistence of a perfect
single-error-correcting code.
By using similar techniques
we prove the nonexistence of perfect single-error-correcting codes in $S_n$
for $n\in\{6,8,9,10\}$. For each such $n$, let $\cC$ be a perfect single-error-correcting code in $S_n$.
We will describe the corresponding linear equations system
and use a computer to show that this linear equations system does not have
a solution over the integers.
\begin{itemize}
\item[{$n=6$:}] We denote by $D_6$ the set of all vectors of
$ \{1,2,3\}^6$ in which each of
the elements 1,2,3 appears twice. For each $\mathbf{v}\in D_6$ we define $
S_\mathbf{v}$ to be the set of eight permutations
in $S_6$, such that the elements $1$ and $2$ appear in the two positions in which $1$
appears in $\mathbf{v}$, the elements $3$ and $4$ appear in the two positions in which $2$ appears in
$\mathbf{v}$, and the elements $5$ and $6$ appear in the two positions in which $3$ appears in $\mathbf{v}$.
Let $x_{\mathbf{v}}=|\cC\cap S_{\mathbf{v}}|$ and let $\mathbf{x}=(x_{\mathbf{v}_1},x_{\mathbf{v}_2},\ldots,x_{\mathbf{v}_m})$, where $m=|D_6|=\frac{6!}{2!2!2!}$. By considering how the elements of $S_{\mathbf{v}}$ are covered (similarly to the way it was done in the proof of Theorem~\ref{thm:nonexist}), for each
$\mathbf{v}\in D_6$, we obtain a linear equations system
of the form $A\mathbf{x}^T=|S_{\mathbf{v}}|\cdot\mathbf{1}=8\cdot \mathbf{1}$, where $A$ is a square matrix of order $m$.
The kernel of $A$ is an one-dimensional
vector space which is spanned by a vector $\mathbf{y}\in \{0,-1,1\}^9$, that has both negative and positive entries.
Every solution for this system is of the form $\frac{8}{6}\cdot\mathbf{1}+\alpha\cdot \mathbf{y}$, $\alpha\in \mathbb{R}$, and therefore, the system does not have a solution in which all entries are integers.

\item[{$n=8$:}]  We denote by $D_8$ the set of all vectors $\mathbf{v}\in \{1,2,3,4\}^8$ in which each of
the elements 1 and 2 appears three times and each of the elements $3$ and $4$ appears once.
For every $\mathbf{v}\in D_8$ we define $
S_\mathbf{v}$ to be the set of 36 permutations
in $S_8$, such that the elements $1,2,$ and $3$ appear in the three positions in which $1$
appears in $\mathbf{v}$, the elements $4,5,$ and $6$ appear in the three positions in which $2$ appears in
$\mathbf{v}$, the element $7$ appears in the position of $3$ in $\mathbf{v}$, and the
element $8$ appears in the position of $4$ in $\mathbf{v}$. Let $x_{\mathbf{v}}=|\cC\cap S_{\mathbf{v}}|$
and let $\mathbf{x}=(x_{\mathbf{v}_1},x_{\mathbf{v}_2},\ldots,x_{\mathbf{v}_m})$, where $m=|D_8|=\frac{8!}{3!3!}$.
By considering how elements of $S_{\mathbf{v}}$ are covered, for each
$\mathbf{v}\in D_8$, we obtain a linear equations system
of the form $A\mathbf{x}^T=36\cdot\mathbf{1}$,
where $A$ is a square matrix of order~$m$. The system has a unique solution, $\mathbf{x}^T=\frac{36}{8}\cdot\mathbf{1}$, which has non-integer entries.

\item[{$n=9$:}] We denote by $D_9$ the set of all vectors $\mathbf{v}\in \{1,2,3\}^9$ in which
the element 1 appears five times and each of the elements $2$ and $3$ appears twice.
For every $\mathbf{v}\in D_9$ we define $
S_\mathbf{v}$ to be the set of 480 permutations
in $S_8$, such that the elements $1,2,3,4,$ and $5$ appear in the five positions in which $1$
appears in $\mathbf{v}$, the elements $6$ and $7$ appear in the two positions in which $2$ appears in
$\mathbf{v}$, and the elements $8$ and $9$ appear in the two positions in which $3$ appears in $\mathbf{v}$.
Let $x_{\mathbf{v}}=|\cC\cap S_{\mathbf{v}}|$
and let $\mathbf{x}=(x_{\mathbf{v}_1},x_{\mathbf{v}_2},\ldots,x_{\mathbf{v}_m})$, where $m=|D_9|=\frac{9!}{5!2!2!}$. By considering how elements of $S_{\mathbf{v}}$ are covered, for each
$\mathbf{v}\in D_9$, we obtain a linear equations system
of the form $A\mathbf{x}^T=480\cdot\mathbf{1}$,
where $A$ is a square matrix of order $m$. The system has a unique solution, $\mathbf{x}^T=\frac{480}{9}\cdot\mathbf{1}$, which has non-integer entries.

\item[{$n=10$:}] We denote by $D_{10}$ the set of all vectors $\mathbf{v}\in \{1,2,3\}^{10}$ in which each of
the elements 1 and 2 appears four times and the element $3$ appears twice.
For every $\mathbf{v}\in D_{10}$ we define $
S_\mathbf{v}$ to be the set of 1,152 permutations
in $S_{10}$, such that the elements $1,2,3,$ and $4$ appear in the four positions in which $1$
appears in $\mathbf{v}$, the elements $5,6,7,$ and $8$ appear in the four positions in which $2$ appears in
$\mathbf{v}$, and the elements $9$ and $10$ appear in the two positions in which $3$ appears in $\mathbf{v}$.
Let $x_{\mathbf{v}}=|\cC\cap S_{\mathbf{v}}|$ and let $\mathbf{x}=(x_{\mathbf{v}_1},x_{\mathbf{v}_2},\ldots,x_{\mathbf{v}_m})$, where $m=|D_{10}|=\frac{10!}{4!4!2!}$. By considering how elements of $S_{\mathbf{v}}$ are covered, for each
$\mathbf{v}\in D_{10}$, we obtain a linear equations system
of the form $A\mathbf{x}^T=1,152\cdot\mathbf{1}$,
where $A$ is a square matrix of order $m$.
The system has a unique solution, $\mathbf{x}^T=\frac{1,152}{10}\cdot\mathbf{1}$, which has non-integer entries.

\end{itemize}
% Put the path to your bib file (or whatever else you use) here

\section*{Appendix B}

The purpose of this appendix is to prove Theorem~\ref{thm:diameter2Anticode} given in Section~\ref{sec:diameter}.

\noindent{\textbf{Theorem 7.}}\emph{
Every optimal anticode with diameter 2 (using the Kendall $\tau$-distance) in~$S_n$, $n\geq 5$,
is a ball with radius one whose size is $n$.}

\begin{lemma}\label{lem:neighbors}
Let $\sigma=(i,i+1)\circ(i+1,i+2)$ and let $\rho\neq \sigma$ be a permutation of weight 2 and distance 2 from $\sigma$. Then $\rho=(j,j+1)\circ(i+1,i+2)$ or $\rho=(i+1,i+2)\circ(i,i+1)$.
\end{lemma}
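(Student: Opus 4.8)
The plan is to recast everything in terms of inversion sets, which linearizes the distance computation. For a permutation $\pi$ let $\mathrm{Inv}(\pi)$ denote the set of unordered pairs $\{a,b\}$, $a<b$, that occur in decreasing order in $\pi$. Reading \eqref{eq:distFormula} pair by pair, a pair $\{a,b\}$ contributes to $d_K(\sigma,\rho)$ precisely when $\sigma$ and $\rho$ disagree on its relative order, so $d_K(\sigma,\rho)=|\mathrm{Inv}(\sigma)\,\triangle\,\mathrm{Inv}(\rho)|$ and $w_K(\pi)=|\mathrm{Inv}(\pi)|$. A direct computation shows that $\sigma$ places the values $i+2,i,i+1$ in positions $i,i+1,i+2$ and fixes everything else, whence $\mathrm{Inv}(\sigma)=\{\,\{i,i+2\},\{i+1,i+2\}\,\}$.

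With this dictionary the hypotheses become $|\mathrm{Inv}(\rho)|=2$ and $|\mathrm{Inv}(\sigma)\,\triangle\,\mathrm{Inv}(\rho)|=2$. Since both sets have size $2$, inclusion--exclusion forces $|\mathrm{Inv}(\sigma)\cap\mathrm{Inv}(\rho)|=1$; that is, $\mathrm{Inv}(\rho)$ contains exactly one of $\{i,i+2\}$ and $\{i+1,i+2\}$, together with one further pair $P$. I would then split into two cases according to which of the two pairs survives and, in each case, determine all admissible $P$.

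The engine for both cases is the standard closure property of inversion sets: for $a<b<c$, if $\{a,c\}\in\mathrm{Inv}(\pi)$ then $\{a,b\}\in\mathrm{Inv}(\pi)$ or $\{b,c\}\in\mathrm{Inv}(\pi)$, while if both $\{a,b\},\{b,c\}\in\mathrm{Inv}(\pi)$ then $\{a,c\}\in\mathrm{Inv}(\pi)$; in particular a pair $\{a,a+d\}$ with $d\ge 2$ cannot lie in a two-element inversion set unless $d=2$. In the case $\{i,i+2\}\in\mathrm{Inv}(\rho)$, $\{i+1,i+2\}\notin\mathrm{Inv}(\rho)$, applying closure to $\{i,i+2\}$ forces $\{i,i+1\}\in\mathrm{Inv}(\rho)$, so $\mathrm{Inv}(\rho)=\{\,\{i,i+1\},\{i,i+2\}\,\}$ is pinned down uniquely, and this is exactly $\mathrm{Inv}\big((i+1,i+2)\circ(i,i+1)\big)$. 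In the case $\{i+1,i+2\}\in\mathrm{Inv}(\rho)$, $\{i,i+2\}\notin\mathrm{Inv}(\rho)$, closure rules out every second pair except a consecutive pair $\{b,b+1\}$ disjoint from $\{i+1,i+2\}$ (giving $\rho=(b,b+1)\circ(i+1,i+2)$) or the pair $\{i+1,i+3\}$ (giving $\rho=(i+2,i+3)\circ(i+1,i+2)$); both are of the form $(j,j+1)\circ(i+1,i+2)$.

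The main obstacle is the bookkeeping in the second case: one must check that no other choice of $P$ yields a valid inversion set---in particular that the ``forbidden'' neighbours $\{i,i+1\}$ and $\{i+2,i+3\}$ are excluded by closure (each would force $\{i,i+2\}$, respectively $\{i+1,i+3\}$, to be present as well), and that pairs of difference $\ge 3$ are excluded for size reasons---and then confirm, using the fact that a permutation is determined by its inversion set, that each surviving inversion set is realized by the claimed word. Everything else is routine once the inversion-set reformulation is in place.
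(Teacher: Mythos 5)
Your proof is correct, and it takes a genuinely different route from the paper's. The paper argues via factorizations: every permutation of weight 2 is a product of two adjacent transpositions, and its proof runs through cases according to whether the right factor is $(i+1,i+2)$, the left factor is $(i+1,i+2)$ (a case that needs four further subcases), or neither, settling each case either by multiplication identities combined with bipartiteness (Lemma~\ref{lem:biGraph}) or by a direct count using (\ref{eq:distFormula}). You instead work with inversion sets, turning both hypotheses into set arithmetic ($|\mathrm{Inv}(\rho)|=2$ and $|\mathrm{Inv}(\sigma)\,\triangle\,\mathrm{Inv}(\rho)|=2$, hence $|\mathrm{Inv}(\sigma)\cap\mathrm{Inv}(\rho)|=1$) and letting the transitivity/closure properties of inversion sets eliminate the invalid choices of the second pair. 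Your route is the more systematic one: the closure property mechanically rules out exactly the configurations that the paper disposes of by ad hoc subcases, and it makes transparent why precisely one inversion of $\sigma$ must survive in $\rho$. What the paper's route buys is self-containedness: it needs nothing beyond (\ref{eq:distFormula}), Lemma~\ref{lem:biGraph}, and the group identities already set up, whereas you invoke two standard facts not stated in the paper --- that inversion sets satisfy the two closure rules, and that a permutation is determined by its inversion set --- which a final write-up must prove; both are short (the second is immediate, since in your dictionary equal inversion sets means Kendall $\tau$-distance zero). The paper also obtains, as a by-product, the converse statement that both listed forms really are at distance 2 from $\sigma$; your proof does not, but the lemma as stated does not require it. Your case analysis itself is complete and consistent with the paper's convention that left multiplication acts on positions: the forced pair $\{i,i+1\}$ in the first case, the exclusion of $\{i,i+1\}$ and $\{i+2,i+3\}$ by closure and of pairs of difference at least 3 by the size count, and the identification of $\{i+1,i+3\}$ with $(i+2,i+3)\circ(i+1,i+2)$ are all correct.
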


\begin{proof}
Recall first that for any two permutations $\alpha,\beta$, $d_K(\alpha,\beta)=1$ if and only if there exists an adjacent transposition $(k,k+1)$, such that $\alpha=(k,k+1)\circ\beta$. We distinguish between four cases. In the first two cases the permutation $\rho$ is at distance 2 from $\sigma$.
\begin{itemize}
\item[I.] $\rho=(j,j+1)\circ(i+1,i+2)$. In this case $\sigma=(i,i+1)\circ(j,j+1)\circ\rho$ and therefore $d_K(\sigma,\rho)\leq2$. By Lemma~\ref{lem:biGraph} we have that the Kendall $\tau$-metric is bipartite and since $\sigma$ and $\rho$ are both of even weight it follows that $d_K(\sigma,\rho)\geq2$. Thus, $d_K(\sigma,\pi)=2$.
\item[II.] $\rho=(i+1,i+2)\circ(i,i+1)$. In this case we have that $\sigma=\rho\circ\rho$ and similarly it follows that $d_K(\sigma,\rho)=2$.
\item[III.]  If $\rho=(j,j+1)\circ(k,k+1)$, where $j\neq k$ and $j,k\neq i+1$, then by~(\ref{eq:distFormula}) we have that $d_K(\sigma,\rho)\geq |\{(i+2,i),(i+2,i+1), (k,k+1)\}|>2$.
\item[IV.] If $\rho=(i+1,i+2)\circ(j,j+1)$. We distinguish be between four subcases.
\begin{itemize}
\item[1)]  If $j\not\in\{ i,i+1,i+2\}$, then $\rho=(j,j+1)\circ(i+1,i+2)$ and this case was considered in I.
\item[2)]  $j=i$ was considered in II.
\item[3)]  If $j=i+1$ then $\rho=\varepsilon $, i.e $w_K(\rho)=0$.
\item[4)]  If $j=i+2$ then $\rho=(i+1,i+2)\circ(i+2,i+3)$ and by~(\ref{eq:distFormula}) we have that $d_K(\sigma,\rho)= |\{(i+2,i), (i+2,i+1),(i+1,i+3),(i+2,i+3)\}|=4$.
    \end{itemize}
    \end{itemize}
\end{proof}

\begin{lemma}\label{lem:ii+1i+1i+2}
Let $\sigma=(i,i+1)\circ(i+1,i+2)$ and $\pi=(i+1,i+2)\circ(i,i+1)$, where $i\in [n-2]$, and let $\rho$ be a permutation of weight 2, $\rho\neq \sigma$ and $\rho\neq \pi$. Then either $d_K(\sigma,\rho)\geq 4$ or $d_K(\pi,\rho)\geq 4$.
\end{lemma}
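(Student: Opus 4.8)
The plan is to reduce the statement to showing that $d_K(\sigma,\rho)=2$ forces $d_K(\pi,\rho)=4$. This reduction suffices: all three permutations have Kendall $\tau$-weight $2$, so by bipartiteness (Lemma~\ref{lem:biGraph}) the distances $d_K(\sigma,\rho)$ and $d_K(\pi,\rho)$ are even, and since $\rho\neq\sigma,\pi$ each is at least $2$. Hence if $d_K(\sigma,\rho)\geq 4$ the first alternative of the lemma already holds, whereas if $d_K(\sigma,\rho)=2$ it remains to establish the second alternative, namely $d_K(\pi,\rho)\geq 4$.

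First I would apply Lemma~\ref{lem:neighbors} to $\sigma=(i,i+1)\circ(i+1,i+2)$. Since $\rho$ has weight $2$, lies at distance $2$ from $\sigma$, and $\rho\neq\sigma$, the lemma yields $\rho=(j,j+1)\circ(i+1,i+2)$ or $\rho=\pi$. The hypothesis $\rho\neq\pi$ leaves $\rho=(j,j+1)\circ(i+1,i+2)$ for some $j$, where necessarily $j\neq i$ (otherwise $\rho=\sigma$) and $j\neq i+1$ (otherwise $\rho=\varepsilon$, of weight $0$).

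The key simplification is the observation that $\pi=\sigma^{-1}$: indeed $\sigma^{-1}=(i+1,i+2)^{-1}\circ(i,i+1)^{-1}=(i+1,i+2)\circ(i,i+1)=\pi$. Consequently, right-multiplying both arguments by $\sigma$ and using right invariance of the Kendall $\tau$-metric gives $d_K(\pi,\rho)=d_K(\pi\circ\sigma,\rho\circ\sigma)=d_K(\varepsilon,\rho\circ\sigma)=w_K(\rho\circ\sigma)$, since $\pi\circ\sigma=\sigma^{-1}\circ\sigma=\varepsilon$. Substituting $\rho=(j,j+1)\circ(i+1,i+2)$ and $\sigma=(i,i+1)\circ(i+1,i+2)$ and invoking the braid identity $(i+1,i+2)\circ(i,i+1)\circ(i+1,i+2)=(i,i+2)$, the expression collapses to $w_K(\rho\circ\sigma)=w_K\big((j,j+1)\circ(i,i+2)\big)$. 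Since the transposition $(i,i+2)$ has weight $3$, the whole problem is now the computation of the weight of this product as $j$ ranges over the admissible values $1\leq j\leq n-1$ with $j\notin\{i,i+1\}$.

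For every $j\notin\{i-1,i,i+1,i+2\}$ the adjacent transposition $(j,j+1)$ has support disjoint from, and not interleaved with, that of $(i,i+2)$, so their inversion sets are independent and $w_K\big((j,j+1)\circ(i,i+2)\big)=1+3=4$, whence $d_K(\pi,\rho)=4$. The main (and essentially only) obstacle is the two boundary values $j\in\{i-1,i+2\}$, where the supports abut and the inversions no longer simply add; here I would compute $(j,j+1)\circ(i,i+2)$ explicitly as a permutation of four consecutive symbols and verify directly from~(\ref{eq:distFormula}) that its weight is again $4$ rather than dropping to $2$ (these values may also be vacuous when $i=1$ or when $i+2>n-1$). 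Once these two cases are confirmed, $d_K(\pi,\rho)=4$ in all cases, so the assumption $d_K(\sigma,\rho)=2$ indeed yields $d_K(\pi,\rho)\geq 4$, completing the proof.
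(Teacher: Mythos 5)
Your proof is correct, but it takes a genuinely different route from the paper's. The paper applies Lemma~\ref{lem:neighbors} \emph{twice} --- once to $\sigma$ and, by symmetry, once to $\pi$ --- concluding that a weight-$2$ permutation at distance $2$ from $\sigma$ has the form $(j,j+1)\circ(i+1,i+2)$ while one at distance $2$ from $\pi$ has the form $(j,j+1)\circ(i,i+1)$; since no $\rho\neq\sigma,\pi$ can have both forms, no such $\rho$ is at distance $2$ from both, and bipartiteness (Lemma~\ref{lem:biGraph}) then forces distance at least $4$ from one of them. You instead apply Lemma~\ref{lem:neighbors} only to $\sigma$ and then exploit the group structure: the observation $\pi=\sigma^{-1}$, right invariance, and the braid relation $(i+1,i+2)\circ(i,i+1)\circ(i+1,i+2)=(i,i+2)$ collapse $d_K(\pi,\rho)$ to $w_K\bigl((j,j+1)\circ(i,i+2)\bigr)$, which you evaluate. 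Each approach buys something: the paper's argument is shorter and avoids any weight computation, but it silently relies on the fact that the two canonical forms cannot coincide for admissible $j$ (a point left unstated there); your argument is more computational but self-contained after one use of Lemma~\ref{lem:neighbors}, sidesteps that coincidence issue entirely, and proves the stronger statement that $d_K(\sigma,\rho)=2$ forces $d_K(\pi,\rho)=4$ \emph{exactly}. The two boundary cases you defer ($j=i-1$ and $j=i+2$) do check out: in both cases $(j,j+1)\circ(i,i+2)$ is a $3$-cycle on four consecutive symbols with exactly four inversions, under either composition convention, so your proof is complete once those two routine verifications are written out.
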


\begin{proof}
By Lemma~\ref{lem:neighbors} it follows that if $d_K(\sigma,\rho)=2$ then $\rho=(j,j+1)\circ(i+1,i+2)$ or $\rho=\pi$. By symmetry it follows that if $d_K(\pi,\rho)=2$ then $\rho=(j,j+1)\circ(i,i+1)$ or $\rho=\pi$. Hence, there is no permutation $\rho$ of weight 2 and distance 2 from both $\sigma$ and $\pi$. By Lemma~\ref{lem:biGraph} we also have that the Kendall $\tau$-metric is bipartite and we conclude that any permutation of weight 2 other then $\sigma$ and $\pi$ must be at distance at least four from $\sigma$ or $\pi$.
\end{proof}

\begin{lemma}\label{lem:weight2}
Let $\cA$ be an anticode in $S_n$ with diameter 2 such that $\varepsilon\in \cA$, and let $\cB$ be the set of all permutations of weight 2 in $\cA$.
If $|\cB|\geq 4$ then $\cB$ is contained in a ball of radius one centered at some permutation $\sigma \in S_n$ of weight one.
\end{lemma}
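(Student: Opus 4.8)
The plan is to recast the whole question in terms of inversion sets, after which it becomes an elementary statement about pairwise--intersecting pairs. First I would record the basic reduction: since $\cA$ has diameter $2$ and $\varepsilon\in\cA$, every $\rho\in\cB$ has $w_K(\rho)=2$, and any two distinct $\rho,\rho'\in\cB$ satisfy $d_K(\rho,\rho')\le 2$. As both have even weight, Lemma~\ref{lem:biGraph} forces $d_K(\rho,\rho')$ to be even, hence exactly $2$. Thus $\cB$ is a set of weight--$2$ permutations that are \emph{pairwise at distance exactly $2$}.

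Next I would pass to inversion sets. By (\ref{eq:distFormula}) the Kendall $\tau$-distance equals the size of the symmetric difference of the inversion sets, where for $\rho\in S_n$ I write $I(\rho)=\{\,\{a,b\}: a<b,\ \rho^{-1}(a)>\rho^{-1}(b)\,\}$. A weight--$2$ permutation then has $|I(\rho)|=2$, and two weight--$2$ permutations lie at distance $2$ precisely when $|I(\rho)\cap I(\rho')|=1$. A short case check shows that a size--$2$ set of pairs is realizable as some $I(\rho)$ in exactly two ways: either two disjoint consecutive pairs $\{\{i,i{+}1\},\{j,j{+}1\}\}$ (the ``disjoint'' weight--$2$ permutations), or two pairs of a consecutive triple $\{a,a{+}1,a{+}2\}$ one of which is the non--consecutive pair $\{a,a{+}2\}$ (the two $3$-cycles on that triple). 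This is exactly the dichotomy already made explicit in Lemma~\ref{lem:neighbors}.

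The core of the argument is now purely combinatorial. Viewing each $I(\rho)$ ($\rho\in\cB$) as an edge on the ground set of integer pairs, the family $\{I(\rho):\rho\in\cB\}$ consists of distinct, pairwise--intersecting edges, so by the classical dichotomy it is either a star (all edges through one vertex) or a single triangle $\{\{P,Q\},\{Q,R\},\{P,R\}\}$. A triangle has only three edges, while $|\cB|\ge 4$, so the family must be a star: there is a pair $P^{\ast}$ with $P^{\ast}\in I(\rho)$ for every $\rho\in\cB$. It then remains to show $P^{\ast}$ is a consecutive pair $\{k,k{+}1\}$. If instead $P^{\ast}=\{c,c{+}2\}$ were non--consecutive, then every $\rho\in\cB$ would contain this inversion, forcing $\rho$ to be one of the two $3$-cycles on the triple $\{c,c{+}1,c{+}2\}$; but there are only two such permutations, which is precisely the obstruction recorded in Lemma~\ref{lem:ii+1i+1i+2}, giving $|\cB|\le 2$, a contradiction. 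Hence $P^{\ast}=\{k,k{+}1\}$, every $\rho\in\cB$ satisfies $\{k,k{+}1\}\in I(\rho)$, i.e. $d_K(\rho,(k,k{+}1))=1$, so $\cB\subseteq B((k,k{+}1),1)$ with $w_K((k,k{+}1))=1$, as claimed.

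The step I expect to be the main obstacle is upgrading ``every pair in $\cB$ shares one inversion'' to ``all of $\cB$ shares one common consecutive inversion'': this is where the star--versus--triangle dichotomy, the hypothesis $|\cB|\ge 4$, and the exclusion of a common non--consecutive pair (via Lemma~\ref{lem:ii+1i+1i+2}) all have to be combined. Everything else is routine bookkeeping with the inversion--set description of weight--$2$ permutations.
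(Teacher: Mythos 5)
Your proposal is correct, and it reaches the lemma by a genuinely different route than the paper. The paper argues directly on permutation types: if $\cB$ contains both $3$-cycles on some consecutive triple, Lemma~\ref{lem:ii+1i+1i+2} caps $|\cB|$ at $2$; if it contains exactly one, Lemma~\ref{lem:neighbors} forces every other element of $\cB$ to have the form $(j,j+1)\circ(i+1,i+2)$, hence $\cB\subset B((i+1,i+2),1)$; and if every element of $\cB$ is a product of two disjoint adjacent transpositions, a direct count shows the family either shares a common transposition or has at most $3$ elements. You instead translate everything into inversion sets: bipartiteness makes distinct elements of $\cB$ lie at distance exactly $2$, which means their $2$-element inversion sets meet in exactly one pair, and then the classical dichotomy that a family of distinct pairwise-intersecting $2$-sets is a star or a triangle does the work; $|\cB|\ge 4$ rules out the triangle, and a common non-consecutive pair $\{c,c+2\}$ is impossible because only two weight-$2$ permutations contain that inversion, so the common pair is some $\{k,k+1\}$ and $\cB\subseteq B((k,k+1),1)$. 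Your version is more unified---the paper's three cases collapse into one dichotomy---and it makes transparent why the threshold is $4$ (a triangle has only three edges); the paper's version has the advantage of staying entirely in the language of adjacent transpositions and reusing verbatim the two lemmas it already proved, with no need to justify the auxiliary facts your route relies on, namely that $d_K$ equals the size of the symmetric difference of inversion sets (immediate from (\ref{eq:distFormula}), but worth a line) and the classification of size-$2$ inversion sets. On that last point, one attribution is loose: the classification is not literally what Lemma~\ref{lem:neighbors} states (that lemma describes the distance-$2$ neighbors of a $3$-cycle), so in a final write-up you should carry out that short check yourself rather than cite it.
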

\begin{proof}
If there exists some $i\in [n-2]$ such that
$(i,i+1)\circ(i+1,i+2),(i+1,i+2)\circ(i,i+1)\in \cB$,
then by Lemma~\ref{lem:ii+1i+1i+2} any other permutation of weight 2 is at distance at least four from either
$(i,i+1)\circ(i+1,i+2)$ or $(i+1,i+2)\circ(i,i+1)$, and therefore $|\cB|= 2$.

If for some $i\in [n-2]$ either $(i,i+1)\circ(i+1,i+2)$ or $(i+1,i+2)\circ(i,i+1)$ belongs to $\cB$, say w.l.o.g.
$(i,i+1)\circ(i+1,i+2)\in\cB$, then every permutation of $\cB\setminus\{(i,i+1)\circ(i+1,i+2)\}$
must be at distance 2 from $(i,i+1)\circ(i+1,i+2)$, and by Lemma~\ref{lem:neighbors} it follows that every such permutation
must be of the form $(j,j+1)\circ (i+1,i+2)$ for some $j\not\in \{i,i+1\}$. Therefore, $\cB\subset B((i+1,i+2),1)$.

If each permutation of $\cB$ is a multiplication of two
disjoint adjacent transpositions then let $\rho=(i,i+1)\circ(j,j+1)\in \cB$, where $j\not\in \{i-1,i,i+1\}$.
Hence, all permutations of $\cB$ are of the form $(\ell,\ell+1)\circ(j,j+1)$, where $\ell\not\in\{j,j+1\}$,
or $(\ell,\ell+1)\circ(i,i+1)$, where $\ell\not\in\{i,i+1\}$.
Assume w.l.o.g. that $\pi=(\ell,\ell+1)\circ(j,j+1)\in \cB$, $\pi\neq \rho$.
If every permutation of $\cB$ is of the form $(k,k+1)\circ(j,j+1)$ then $\cB\subset B((j,j+1),1)$.
Otherwise, the only possible other permutation of $\cB$ is $(i,i+1)\circ (\ell, \ell+1)$ and hence $|\cB|\leq 3$.

Thus, if $|\cB|\geq 4$ then $\cB\subset B(\sigma,1)$, for some $\sigma$ of weight one.
\end{proof}

\emph{\textbf{Proof of Theorem~\ref{thm:diameter2Anticode}:}} Let $\cA\subset S_n$, $n\geq 5$, be an anticode of diameter 2.
The Kendall $\tau$-metric is right invariant and hence w.l.o.g. we can assume
that $\varepsilon\in \cA$. Therefore, all the permutations of $\cA$ are of weight at most two.
We distinguish between four cases:

\begin{itemize}
\item[Case 1:] If $\cA$ does not contain a permutation of weight one then by Lemma \ref{lem:weight2} it follows that
$\cA$ is contained in a ball
of radius one centered at a permutation of weight one or $|\cA|\leq 4$.

\item[Case 2:] If $\cA$ contains exactly one permutation $\sigma\in S_n$ of weight one then by Lemma \ref{lem:biGraph}, the distance between $\sigma$ and any permutation of weight 2 is an odd integer and therefore, all permutations of weight 2 in $\cA$ must be
at distance one from $\sigma$. Thus, $\cA\subseteq B(\sigma,1)$.

\item[Case 3:] If $\cA$ contains two permutations of weight one, $\sigma=(i,i+1)$ and $\pi=(j,j+1)$, where $\sigma$ and $\pi$ are disjoint transpositions, then the only permutation of weight 2 and distance one from both $\sigma$ and $\pi$ is $(i,i+1)\circ(j,j+1)$ and therefore $\cA$ cannot contain more than one permutation of weight 2, hence $|\cA|\leq 4$.

\item[Case 4:] If $\cA$ contains two permutations of weight one, $\sigma=(i,i+1)$ and $\pi=(i+1,i+2)$, for some $i\in[n-2]$, then there is no permutation of weight 2 and distance one from both $\sigma$ and $\pi$ and therefore $\cA$ cannot contain permutations of weight 2, hence $|\cA|\leq 3$.

\item[Case 5:] If $\cA$ contains at least three permutations of weight one then $\cA$ cannot contain permutations of weight 2 and therefore $\cA\subseteq B(\varepsilon,1)$.
\end{itemize}

Thus, we proved that either $\cA$ is contained in a ball of radius one or $|\cA|\leq 4$. Since the size of a ball of radius one in $S_n$ is $n$, it follows that  if $n\geq 5$ then every optimal anticode of diameter 2 in $S_n$ is a ball of radius one. $\hfill{\Box}$

%\bibliography{allbib,extra}

\end{document}